\newcommand{\beq}{\begin{equation}}
\newcommand{\eeq}{\end{equation}}
\newcommand{\bqa}{\begin{eqnarray}}
\newcommand{\eqa}{\end{eqnarray}}
\definecolor{green}{rgb}{0.00,0.50,0.00}
\newtheorem{theorem}{Theorem}
\newtheorem{corollary}[theorem]{Corollary}
\newtheorem{definition}[theorem]{Definition}
\newtheorem{lemma}[theorem]{Lemma}
\newtheorem{proposition}[theorem]{Proposition}
\newtheorem{remark}[theorem]{Remark}
\newenvironment{proof}[1][Proof]{\noindent\textbf{#1.} }{\ \rule{0.5em}{0.5em}}
\begin{document}

\title{Quantum Filtering for Squeezed Noise Inputs}

 \author{John Gough$^{\dag, \sharp}$, Dylon Rees$^\ast$\\
 Department of Physics, Aberystwyth University, Wales, UK.\\
 $\dag$ Orcid ID: 0000-0002-1374-328X\\
 $\sharp$ jug@aber.ac.uk\\
 $\ast$ dyr6@aber.ac.uk}
 
\date{\today}
\maketitle

\begin{abstract}
We derive the quantum filter for a quantum open system undergoing quadrature measurements (homodyning) where the input field is in a general quasi-free state. This extends previous work for thermal input noise and allows for squeezed inputs.  We introduce a convenient class of Bogoliubov transformations which we refer to as balanced and formulate the quantum stochastic model with squeezed noise as an Araki-Woods type representation. We make an essential use of the Tomita-Takesaki theory to construct the commutant of the C*-algebra describing the inputs and obtain the filtering equations using the quantum reference probability technique. The derived quantum filter must be independent of the choice of representation and this is achieved by fixing an independent quadrature in the commutant algebra.
\end{abstract}

\section{Introduction}
The problem of quantum estimation comes essentially down to giving an optimal observable $\widehat X$ constructed from the measured observables to approximate a given compatible observable $X$. The measurements are naturally required to be compatible (that is, they form a commutative algebra $\mathfrak{M}$) and the observables that we may estimate must likewise be compatible with the measurements (and so belong to the commutant $\mathfrak{M}'$. In quantum filtering problems, the measurement algebra is growing over time as new measurements are made, thereby making the problem a dynamic one \cite{Belavkin,Slava}. Optimality is formulated in a least squares sense \cite{BvHJ,BvH}. Traditionally, this problem has been formulated for quantum open systems where a fixed system interacts with driving (bosonic) vacuum noise processes \cite{HP84} and quadrature measurements (homodyning) are made on the output noise processes \cite{GC85}. However, coherent state input \cite{GK}, multi-photon inputs and superposition of coherent states (Schr\"{o}dinger cat) inputs \cite{GJNC} have also been shown to be amenable.

More recently, the quantum filtering problem has been formulated and solved for thermal input fields \cite{Gough25}. In this case, the input fields are not unitarily equivalent to vacuum fields \cite{HL85}, and one also cannot obtain them through an embedding or a dilation step. Instead, one had to rely on the full Tomita-Takesaki theory \cite{TT} where an Araki-Woods representation \cite{ArakiWoods} was used for the thermal inputs and where the commutant algebra was more involved than in the standard Weyl-Fock setting.

The purpose of this paper is to extend this result to bosonic input processes in general quasi-free states. We are in the fortunate position that a variant of the Araki-Woods technique applies here. In fact, one can simplify the machinery of Tomita-Takesaki in the spirit of Rieffel and van Daele \cite{RvD} and consider this in terms of projections on to symplectically complementary subspaces at the one-particle level which is then second-quantized. This route has been successfully exploited by Belton, Lindsay and coauthors \cite{Belton}.

We will use the underlying concepts of the C*-algebra of the canonical commutation relations (CCR), Bogoliubov transformations, etc., see \cite{Bhat,Kupsch,Petz}. However, the basic formulation is understandable in terms of more familiar concepts from quantum optics, in particular, two-mode squeezing and we make use of standard results on boson modes in a finite-dimensional setting \cite{Bishop}.

The paper is organized as follows. In Section \ref{sec:2mode}, we recall the features behind two-mode squeezing. Here, we introduce the definition of \textit{balanced Bogoliubov transformations} which are a convenient class. An essential role is played by the Hellmich-Honegger-K\"{o}stler-K\"{u}mmerer-Rieckers representation \cite{HKKR} for squeezed noise, however, we use the simple boson mode setting as an accessible way to segue into their results and we end up using the same algebraic coefficients. In Section \ref{sec:symp}, we describe the infinite-dimensional setting, and in \ref{sec:nonfock} we describe the quantum stochastic models for squeezed noise. Finally, in Section \ref{sec:filter} we solve the filtering problem for homodyne measurement of the output quadrature from a squeezed noise input.

Here we have an open quantum system described by a unitary process $U_t$ satisfying a quantum stochastic differential equation \cite{HL85} driven by squeezed bosonic input processes $B_t,B_t^\ast$. The output processes $B_{\text{out}}(t) = U_t^\ast B(t) U_t$ give the scattered fields and we measure the quadrature process $Y(t)=B_{\text{out}}(t)+B_{\text{out}}(t)^\ast$. The filtered estimate for $U_t^\ast X\, U_t$, for any system observable $X$, should belong to the measurement algebra $\mathfrak{Y}_t$ up to time $t$. In fact, $U_t^\ast X\, U_t$ belongs to the commutant of the measurement algebra $\mathfrak{Y}_t'$, and so is compatible with the measurements. We have that $\mathfrak{Y}_t =U_t^\ast \mathfrak{Z}_t U_t$ where $\mathfrak{Z}_t$ is the commutative algebra generated by the \textit{input} quadratures $B_s+B_s^\ast$ for $s\le t$. In the present case, the commutant of $\mathfrak{Z}_t$ includes not just the system but the commuting processes $B_t',B_t^{'\ast}$. The mathematical step to compute the filtered estimate for $U_t^\ast X\, U_t$ is to use the reference probability technique \cite{BvH} to replace $U_t$ with a process $V_t\in \mathfrak{Z}_t'$, see the Kallianpur-Striebel formula in Theorem \ref{thm:QKS}. This is where we encounter the additional commuting processes $B_t',B_t^{'\ast}$ however, by a result of Holevo \cite{Holevo}, we only need to know there action on the cyclic state of the representation (in this case the joint Fock vacuum) in order to project back from the commutant into the measured algebra itself.

%%%%%%%%%%%%%%%%%%%%%%%%%%%%%%%%%%%%%%%%%%%%%%%%%%%%%%%%%%%%%%%%%%%%%%%%%%%%
%%%%%%%%%%%%%%%%%%%%%%%%%%%%%%%%%%%%%%%%%%%%%%%%%%%%%%%%%%%%%%%%%%%%%%%%%%%%
%%%%%%%%%%%%%%%%%%%%%%%%%%%%%%%%%%%%%%%%%%%%%%%%%%%%%%%%%%%%%%%%%%%%%%%%%%%%
%%%%%%%%%%%%%%%%%%%%%%%%%%%%%%%%%%%%%%%%%%%%%%%%%%%%%%%%%%%%%%%%%%%%%%%%%%%%
%%%%%%%%%%%%%%%%%%%%%%%%%%%%%%%%%%%%%%%%%%%%%%%%%%%%%%%%%%%%%%%%%%%%%%%%%%%%
%%%%%%%%%%%%%%%%%%%%%%%%%%%%%%%%%%%%%%%%%%%%%%%%%%%%%%%%%%%%%%%%%%%%%%%%%%%%
%%%%%%%%%%%%%%%%%%%%%%%%%%%%%%%%%%%%%%%%%%%%%%%%%%%%%%%%%%%%%%%%%%%%%%%%%%%%

\section{Two-mode Squeezing}
\label{sec:2mode}
We take $\mathfrak{F}$ to denote the Fock space of a single boson mode with annihilator $a$ satisfying the CCR $[a,a^\ast ] =1$. A state $\langle \cdot \rangle$ is mean-zero if $\langle a \rangle =0$. Its second moments may be assembled into a covariance matrix
\begin{eqnarray}
C= 
\left[ \begin{array}{cc}
     \langle a a^\ast \rangle &  \langle aa \rangle\\
     \langle a^\ast a^\ast \rangle &  \langle a^\ast a \rangle
\end{array}
\right] 
=
\left[ \begin{array}{cc}
     n+1 &  m\\
     m^\ast   &   n
\end{array}
\right] .
\label{eq:C}
\end{eqnarray}
The vacuum state ($n,m=0$) has covariance matrix $C_{\text{vac}}= \begin{bmatrix}
    1 &0 \\
    0 & 0
\end{bmatrix}$.

The parameter $n=\langle a^\ast a \rangle$ is positive and gives the \textit{average occupation number}. The parameter $m=\langle a^2 \rangle $ is complex and is called the \textit{squeezing}. The eigenvalues of $C$ are
\begin{eqnarray}
    \lambda_\pm = \frac{1}{2}(2n+1) \pm \sqrt{ |m|^2 + \frac{1}{4} } 
\end{eqnarray}
and positivity of $C$ implies
\begin{eqnarray}
    \Delta = \text{det} (C)=n(n+1)- |m|^2 \ge 0.
\end{eqnarray}
We say that the state is mean-zero Gaussian if we have
\begin{eqnarray}
    \langle e^{iu^\ast a +i u\, a^\ast }\rangle = \exp \bigg\{ 
    -\frac{1}{2}[ u^\ast ,u] \, C \left[ \begin{array}{c}
     u\\
     u^\ast
\end{array}
\right]
\bigg\} 
=
e^{- \frac{1}{2} (2n+1) |u|^2-\frac{1}{2}m \, u^{\ast 2} -\frac{1}{2}m^\ast u^2}.
\label{eq:char}
\end{eqnarray}

Gaussian states are also known as \textit{quasi-free}. If $m=0$, then the moment generating function (\ref{eq:char}) only depends on $|u|$ and is said to be \textit{gauge-invariant} (since the replacement $u \to e^{i \theta} u$ leaves all moments unchanged!) and these states are referred to as thermal states.

For $m \neq 0$, the state is then said to be squeezed. We say that the squeezing is \textit{maximal} if we have $\Delta =0$, i.e., the equality $|m|^2 = n(n+1)$, and \textit{sub-maximal} if $|m|^2<n(n+1)$.

    \bigskip
    
    \textbf{Example (Maximal squeezing):}
In the maximal case, the covariance matrix $C$ is singular: it has an eigenvalue $2n+1$ with eigenvector $[\sqrt{n+1}, \sqrt{n}]^\top$ while the other eigenvalue is zero. 

It suffices to just consider one mode: let $a$ be in the vacuum state, then setting
\begin{eqnarray}
    b= e^{i\theta /2} \sqrt{n+1} \, a + e^{-i \theta /2} \sqrt{n}\, a^\ast
\end{eqnarray}
we see that $b$ is also a boson mode with Gaussian state with number $n$ and squeezing $m=e^{i\theta} \sqrt{n(n+1)}$.

\bigskip

Let $a_1,a_2$ be copies of a boson mode, then we may consider the pair as commuting boson modes $a_1 \otimes I$ and $I\otimes a_2$ on $\mathfrak{F} \otimes \mathfrak{F}$. For convenience, we shall drop the tensor product symbols and just write $a_1$ for $a_1 \otimes I$, etc. We will be interested in linear transformations of the form
\begin{eqnarray}
    b_1 &=& x_1 \, a_1 + y_1 \, a_1^\ast + z_1 \, a_2 + w_1 \, a_2^\ast , \nonumber \\
    b_2 &=& x_2 \, a_1 + y_2 \, a_1^\ast + z_2 \, a_2 + w_2 \, a_2^\ast ,
    \label{eq:2mode}
\end{eqnarray}
or equivalently,
\begin{eqnarray}
    \begin{bmatrix}
        b_1 \\
        b_1^\ast \\
        b_2 \\
        b_2^\ast
    \end{bmatrix}
    = S \,
     \begin{bmatrix}
        a_1 \\
        a_1^\ast \\
        a_2 \\
        a_2^\ast
    \end{bmatrix}, \quad
    \text{with}\, \,
    S= \begin{bmatrix}
        x_1 & y_1 & z_1 & w_1 \\
        y_1^\ast & x_1^\ast & w_1^\ast & z_1^\ast \\
        x_2 & y_2 & z_2 & w_2 \\
        y_2^\ast & x_2^\ast & w_2^\ast & z_2^\ast
    \end{bmatrix}
\end{eqnarray}
where the coefficients are generally complex. The transformation is a \textit{Bogoliubov transformation} if the new operators $b_1$ and $b_2$ are again commuting boson modes. 

\begin{lemma}[Triviality]
We say that a Bogoliubov transformation is trivial if it does not mix the $a_1,a_1^\ast$ with the $a_2,a_2^\ast$. We note that if $b_1\equiv x_1a_1+y_1a_1^\ast$, that is, $z_1=w_1=0$, then $b_2 \equiv z_2 a_2 + w_2 a_2^\ast$.
\end{lemma}
\begin{proof}
    If $z_1=w_1=0$, then we obtain 
    $i) \, |x_1|^2 = |y_1|^2 +1, \, ii) \, x_1x_2^\ast = y_1 y_2^\ast $, and $ iii) \, x_1y_y=y_1 x_2$. Multiplying $ii)$ by $y_2$ and using $iii)$ yields $y_1 ( |x_2 |^2 - |y_2|^2 ) =0$. Likewise, multiplying $ii)$ by $x_2$ and using $iii)$ yields $x_1 ( |x_2 |^2 - |y_2|^2 ) =0$. As we cannot have both $x_1$ and $y_1$ vanishing, it must be the case that $ |x_2 |^2 - |y_2|^2 =0$.

    From $ii)$, we have that $|x_1|^2 |x_2|^2 = |y_1|^2 |y_2|^2$, however, since $ |x_2 |^2 = |y_2|^2 $, we either have that both $x_2$ and $y_2$ vanish or that $|x_1|^2=|y_1|^2$. But the latter is ruled out by $i)$. Therefore, we must have $x_2=y_2$. Similarly, setting $x_1=y_1=0$ would imply that $w_2=z_2=0$.
\end{proof}

\begin{proposition}
\label{prop_inverse}
    The transformation (\ref{eq:2mode}) is Bogoliubov provided that the coefficient matrix $S$ is invertible with
    \begin{eqnarray}
        S^{-1} =TS^\ast T, \quad \text{where}
        \, \,
        T= \begin{bmatrix}
            1 &0 & 0 & 0 \\
            0 &-1 & 0 & 0 \\
            0 &0 & 1 & 0 \\
            0 &0 & 0 & -1 \\
        \end{bmatrix}
    \end{eqnarray}
    and this implies
    \begin{eqnarray}
        x_ix_j^\ast +z_iz_j^\ast -y_i y_j^\ast -w_iw_j^\ast &=& \delta_{ij} \\
        x_1y_2-y_1 x_2+z_1w_2 -w_1 z_2&=&0.
    \end{eqnarray}
\end{proposition}
They follow from the requirements that $[b_i,b_j^\ast ]=\delta_{ij}$ and $[b_1,b_2]=0 $, respectively. The matrix form is adapted from, for example, \cite{GJN}.

\begin{corollary}
        The Bogoliubov transformation (\ref{eq:2mode}) can be inverted to obtain
    \begin{eqnarray}
        a_1 &=& x_1^\ast \, b_1 +x_2^\ast \, b_2 -y_1 \, b_1^\ast -y_2\, b_2^\ast \nonumber \\
        a_2 &=& z_1^\ast \, b_1 + z_2^\ast \, b_2 - w_1  \, b_1^\ast - w_2 \, b_2^\ast 
        \end{eqnarray}
\end{corollary}

If we take the joint vacuum state $|0 \rangle\otimes |0 \rangle$ for the modes $a_1$ and $a_2$, then $b_1$ and $b_2$ obtained through a Bogoliubov transformation will have a Gaussian state with the joint covariance matrix taking the block form
    \begin{eqnarray}
    C_{\text{two-mode}}=
        \left[ 
        \begin{array}{cccc}
             \langle b_1 b_1^\ast \rangle & \langle b_1 b_1 \rangle &\langle b_1 b_2^\ast \rangle & \langle b_1 b_2 \rangle  \\
             \langle b_1^\ast b_1^\ast \rangle & \langle b_1^\ast b_1\rangle &\langle b_1^\ast b_2^\ast \rangle & \langle b_1^\ast b_2 \rangle  \\
             \langle b_2 b_1^\ast \rangle & \langle b_2 b_1\rangle &\langle b_2 b_2^\ast \rangle & \langle b_2 b_2 \rangle  \\
             \langle b_2^\ast  b_1^\ast \rangle & \langle b_2^\ast b_1\rangle &\langle b_2^\ast b_2^\ast \rangle & \langle b_2^\ast b_2 \rangle  \\
        \end{array}
        \right]=
        \left[ 
        \begin{array}{cc}
            C_1 & K_{12} \\
             K_{21} & C_2
        \end{array}
        \right] 
        \label{eq:correlationmatrices}
    \end{eqnarray}
where 
\begin{eqnarray}
    C_i=
\left[ \begin{array}{cc}
     n_i+1 &  m_i\\
     m^\ast_i   &   n_i
\end{array}
\right], \quad
K_{12} =
\left[ \begin{array}{cc}
     v &  u\\
     u^\ast   &   v^\ast
\end{array}
\right] , \,
\quad
K_{21} =
\left[ \begin{array}{cc}
     v^\ast &  u\\
     u^\ast   &   v
\end{array}
\right]
.
\end{eqnarray}
The blocks $C_1$ and $C_2$ will be one-mode covariance matrices, while $K_{12}=K_{21}^\ast$ give the cross-correlations. We shall look at these in more detail next. First, we note that
\begin{eqnarray}
    C_{\text{two-mode}}= S \begin{bmatrix}
        C_{\text{vac}} & 0\\
        0 & C_{\text{vac}}
    \end{bmatrix}
    S^\ast.
    \label{eq:C_block}
\end{eqnarray}
\begin{proposition}[Marginal states]
    \label{prop_marg}
    The modes $b_1$ and $b_2$ are separately Gaussian with parameters $(n_1,m_1)$ and $(n_2,m_2) $, respectively,
    where
    \begin{eqnarray}
         y_iy_i^\ast +w_iw_i^\ast &=&n_i , \\
         x_iy_i +z_iw_i &=& m_i .
    \end{eqnarray}
    Furthermore, the Bogoliubov identities give the related identity 
\begin{eqnarray}
    x_ix_i^\ast +z_iz_i^\ast =n_i +1.     
    \label{eq:n+1}
\end{eqnarray}
\end{proposition}

The new modes are, however, correlated in general and this is described below.

\begin{proposition}[Cross-mode correlations]
    \label{lem4}
    The entries in the correlation matrices are
    \begin{eqnarray}
        x_1x_2^\ast +z_1z_2^\ast  \, (=y_1^\ast y_2 +w_1^\ast w_2 )= v,\\
        x_1 y_2 +z_1 w_2 \, (= y_1x_2+w_1z_2) =u .
    \end{eqnarray}
\end{proposition}

By Gaussianity, the joint state of the $b$ modes will factor if they are uncorrelated: $\langle b_1^\ast b_2 \rangle =0 $ and $\langle b_1 b_2 \rangle =0 $. In this case, we have the following:
\begin{eqnarray}
    \langle e^{i\sum_{k=1,2}(t_k^\ast b_k + t_k\, b_k^\ast )}\rangle
    =\langle e^{i(t_1^\ast b_1 + t_1\, b_1^\ast )}\rangle
    \, \langle e^{i(t _2^\ast b_2 + t_2\, b_2^\ast )}\rangle.
\end{eqnarray}

\begin{proposition}
\label{prop:2}
    If the induced quantum state on the modes obtained through a Bogoliubov transformation factorizes, then the squeezing is necessarily maximal.
\end{proposition}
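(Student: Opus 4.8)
The plan is to reduce the statement to a short piece of linear algebra in $\mathbb{C}^{2}$, using only the coefficient identities already tabulated in (\ref{eq:Bog}), (\ref{eq:Bognm}), (\ref{eq:n+1}) and (\ref{eq:BogCorr}). By (\ref{eq:BogCorr}) and the remark preceding the statement, the induced state factorizes exactly when $v=\langle b_1 b_2^\ast\rangle=0$ and $u=\langle b_1 b_2\rangle=0$, i.e. when $x_1 x_2^\ast+z_1 z_2^\ast=0$ and $x_1 y_2+z_1 w_2=0$; combining the latter with the Bogoliubov identity $x_1 y_2-y_1 x_2+z_1 w_2-w_1 z_2=0$ from (\ref{eq:Bog}) also gives $y_1 x_2+w_1 z_2=0$. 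Conceptually the result is immediate: the two $b$-modes generate the same algebra as the $a$-modes (Proposition 1), on which the state is the joint vacuum and hence pure; a product of states is pure only if both factors are pure; and a mean-zero pure Gaussian state of a single mode is a squeezed vacuum, for which $|m|^{2}=n(n+1)$. I will nevertheless give the direct derivation, which stays within the finite-mode bookkeeping of this section and makes the proportionality constants explicit.

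I would package the coefficients into vectors $\mathbf{x}_i=(x_i,z_i)$ and $\mathbf{y}_i=(y_i,w_i)$ in $\mathbb{C}^{2}$, writing $\langle\mathbf{p},\mathbf{q}\rangle=\overline{p_1}q_1+\overline{p_2}q_2$ for the Hermitian inner product and $\mathbf{p}\cdot\mathbf{q}=p_1 q_1+p_2 q_2$ for the bilinear pairing, so that $\mathbf{p}\cdot\mathbf{q}=\langle\overline{\mathbf{p}},\mathbf{q}\rangle$ and $\mathbf{p}\cdot\overline{\mathbf{q}}=\overline{\langle\mathbf{p},\mathbf{q}\rangle}$. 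The identities of (\ref{eq:Bognm}) and (\ref{eq:n+1}) become $\|\mathbf{y}_i\|^{2}=n_i$, $\mathbf{x}_i\cdot\mathbf{y}_i=m_i$ and $\|\mathbf{x}_i\|^{2}=n_i+1$, while the factorization hypothesis reads $\langle\mathbf{x}_2,\mathbf{x}_1\rangle=0$, $\mathbf{x}_1\cdot\mathbf{y}_2=0$ and $\mathbf{x}_2\cdot\mathbf{y}_1=0$. Since $\|\mathbf{x}_i\|^{2}=n_i+1\ge 1$, both $\mathbf{x}_1$ and $\mathbf{x}_2$ are nonzero, so being orthogonal they form a basis of $\mathbb{C}^{2}$, and therefore so does $\{\overline{\mathbf{x}}_1,\overline{\mathbf{x}}_2\}$; note also that $\mathbf{x}_1\cdot\overline{\mathbf{x}}_2=\overline{\langle\mathbf{x}_1,\mathbf{x}_2\rangle}=0$ and $\mathbf{x}_2\cdot\overline{\mathbf{x}}_1=\overline{\langle\mathbf{x}_2,\mathbf{x}_1\rangle}=0$.

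Next I would expand $\mathbf{y}_2=a\,\overline{\mathbf{x}}_1+b\,\overline{\mathbf{x}}_2$; pairing with $\mathbf{x}_1$ gives $0=\mathbf{x}_1\cdot\mathbf{y}_2=a\,\|\mathbf{x}_1\|^{2}$, hence $a=0$ and $\mathbf{y}_2=b\,\overline{\mathbf{x}}_2$, and symmetrically $\mathbf{y}_1=c\,\overline{\mathbf{x}}_1$ by pairing with $\mathbf{x}_2$. Then $n_2=\|\mathbf{y}_2\|^{2}=|b|^{2}\|\mathbf{x}_2\|^{2}$ and $m_2=\mathbf{x}_2\cdot\mathbf{y}_2=b\,(\mathbf{x}_2\cdot\overline{\mathbf{x}}_2)=b\,\|\mathbf{x}_2\|^{2}$, so $|m_2|^{2}=|b|^{2}\|\mathbf{x}_2\|^{4}=n_2\,\|\mathbf{x}_2\|^{2}=n_2(n_2+1)$, and the identical computation for mode $1$ yields $|m_1|^{2}=n_1(n_1+1)$; thus both marginals are maximally squeezed. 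I do not anticipate a genuine obstacle here — the only things to watch are the bookkeeping between the bilinear and Hermitian pairings (the identities $\mathbf{p}\cdot\mathbf{q}=\langle\overline{\mathbf{p}},\mathbf{q}\rangle$ and $\mathbf{p}\cdot\overline{\mathbf{q}}=\overline{\langle\mathbf{p},\mathbf{q}\rangle}$, together with $\overline{\mathbf{x}}_1\perp\overline{\mathbf{x}}_2$ following from $\mathbf{x}_1\perp\mathbf{x}_2$) and the easy-to-overlook point that $\|\mathbf{x}_i\|^{2}=n_i+1$ already forces $\mathbf{x}_i\ne 0$, without which the expansion in the basis $\{\overline{\mathbf{x}}_1,\overline{\mathbf{x}}_2\}$ is unavailable.
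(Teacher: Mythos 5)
Your proof is correct, and it takes a genuinely different (and somewhat more robust) route than the paper's. The paper works directly with the uncorrelatedness conditions $y_1^\ast y_2+w_1^\ast w_2=0$ and $x_1y_2+z_1w_2=0$, rewrites them as ratio conditions to extract $z_1y_1^\ast=x_1w_1^\ast$, and then combines this with $x_1y_1+z_1w_1=m_1$ and $|y_1|^2+|w_1|^2=n_1$ to get $x_1=\frac{m_1}{n_1}y_1^\ast$, $z_1=\frac{m_1}{n_1}w_1^\ast$, which substituted into $|x_1|^2+|z_1|^2=n_1+1$ gives $|m_1|^2=n_1(n_1+1)$. You reach the same structural fact — that factorization forces $(y_i,w_i)$ to be proportional to $(\overline{x_i},\overline{z_i})$ — but by expanding $\mathbf{y}_i$ in the basis $\{\overline{\mathbf{x}}_1,\overline{\mathbf{x}}_2\}$, where you use $\langle b_1b_2^\ast\rangle=0$ in the form $\mathbf{x}_1\perp\mathbf{x}_2$ (equivalent, via the CCR identity $[b_1,b_2^\ast]=0$, to the paper's $\langle b_1^\ast b_2\rangle=0$; both vanish under factorization, so this is legitimate) and derive $\mathbf{x}_2\cdot\mathbf{y}_1=0$ from $[b_1,b_2]=0$. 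What your packaging buys is that the only nonvanishing fact ever needed is $\|\mathbf{x}_i\|^2=n_i+1\ge 1$, which always holds; the paper's ratio manipulations implicitly assume $w_1^\ast$, $y_2$, $z_1$ nonzero and finish by dividing by $n_1$, so strictly speaking degenerate cases (e.g.\ $n_i=0$ or individual vanishing coefficients) would need a separate, albeit trivial, remark, whereas your argument covers them uniformly. Your opening conceptual shortcut — the joint state of the $b$-modes is pure since they generate the same algebra as the $a$-modes, a factorized pure Gaussian state has pure (hence maximally squeezed, $\Delta=0$) marginals — is also a valid one-line alternative, though it leans on the purity criterion for single-mode Gaussian states rather than the coefficient identities the paper confines itself to.
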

\begin{proof}
    The relation (\ref{eq:C_block}) tells us that $C_{\text{two-mode}}$ has two (orthogonal) null eigenvectors given by
    \begin{eqnarray}
        e_1 = (S^\ast)^{-1} \begin{bmatrix}
            0 \\
            1\\
            0\\
            0
        \end{bmatrix}
        =\begin{bmatrix}
            -y_1 \\
            x_1^\ast \\
            -y_2 \\
            x_2^\ast
        \end{bmatrix}, \quad
         e_2 = (S^\ast)^{-1} \begin{bmatrix}
            0 \\
            0\\
            0\\
            1
        \end{bmatrix}
        =\begin{bmatrix}
            -w_1 \\
            z_1^\ast \\
            -w_2 \\
            z_2^\ast
        \end{bmatrix}.
    \end{eqnarray}
    We have, for instance, 
    \begin{eqnarray}
        0= e_1^\ast\,  C_{\text{two-mode}} \, e_1= e_1^\ast\, \begin{bmatrix}
            C_1 & 0 \\
            0 & C_2
        \end{bmatrix} \, e_1
        = f_{11}^\ast \, C_1 f_{11} + f_{12}^\ast C_2 f_{12}
    \end{eqnarray}
    where $f_{11} = \begin{bmatrix}
            -y_1 \\
            x_1^\ast \\
        \end{bmatrix}$
    and $f_{12} = \begin{bmatrix}
            -y_2 \\
            x_2^\ast
        \end{bmatrix}$.

        From the positivity of $C_1$ and $C_2$, we have that both $f_{11}^\ast \, C_1 f_{11}=0$ and $ f_{12}^\ast C_2 f_{12}=0$. Either $f_{11}=0$ and so the Bogoliubov transformation is trivial, or $\text{det}(C_1)=0$. However, the trivial case automatically implies maximality.
\end{proof}

\bigskip

This result tells us that correlations automatically occur in the sub-maximal case.

\subsection{Balanced Bogoliubov Transformations}

The Araki-Woods representation is the well-known construction for a thermal state given by
\begin{eqnarray}
    b_1 &=& \sqrt{n+1}\, a_1 + \sqrt{n}\, a_2^\ast , \nonumber \\
    b_2 &=& \sqrt{n}\, a_1^\ast + \sqrt{n+1} \, a_2 .
\end{eqnarray}
Here, the modes have thermal state marginals with number parameter $n$. The first mode gives the standard Gel'fand-Naimark-Segal (GNS) construction, while the second mode is the commuting representation coming from Tomita-Takesaki theory. In this case, we see that
\begin{eqnarray}
    C= \left[ 
    \begin{array}{cc}
         n+1&0  \\
         0& n 
    \end{array}
    \right] 
    ,
    \qquad
    K= \left[ 
    \begin{array}{cc}
         0 & \sqrt{n(n+1)}  \\
         \sqrt{n(n+1)} & 0
    \end{array}
    \right] .
    \end{eqnarray}
We note that even though the modes $b_1$ and $b_2$ yield commuting representation of the thermal state, with the joint vacuum of $a_1$ and $a_2$ as GNS state, they are nevertheless correlated!

It is natural to ask if there is a similar construction leading to two modes which have identical squeezed marginals. To this end, we observe that the Bogoliubov and mode state coefficient identities are symmetric under the combined interchanges $x\leftrightarrow z$ \textit{and} $y \leftrightarrow w$. This suggests the following class of Bogoliubov transformations which we term balanced (the Araki-Woods case being an example). We shall often just say that the two modes obtained by this way are balanced.

\begin{definition}
\label{def:BBT}
    We say that a Bogoliubov transformation of the form (\ref{eq:2mode}) is \textbf{balanced} if the coefficients of the two modes are related by  
    \begin{eqnarray}
        x_2=z_1, \quad y_2 = w_1, \quad z_2 = x_1 , \quad w_2 = y_1 .
        \label{eq:interchange}
    \end{eqnarray}
    A balanced Bogoliubov transformation is therefore of the form
    \begin{eqnarray}
    b_1 &=& x \, a_1+ z \, a_2 + y  \, a_1^\ast  + w \, a_2^\ast , \nonumber \\
    b_2 &=& z \, a_1 + x \, a_2+ w \, a_1^\ast  + y \, a_2^\ast .
    \label{eq:BalBog}
\end{eqnarray}
\end{definition}

From (\ref{eq:2mode}), we see that the parameters $x,y,z,w$ are complex and are required to satisfy the identities
\begin{eqnarray}
    |x|^2+|z|^2-|y|^2- |w|^2 =1, \\
    x z^\ast + z x^\ast - y w^\ast -w y^\ast =0.
    \label{eq:Bod_ids}
\end{eqnarray}

The induced states of $b_1$ and $b_2$ have the same number and squeezing parameters. In fact, when $C$ is given by (\ref{eq:C}) and the inter-mode correlations are given by
    \begin{eqnarray}
        K
        =
        \left[ 
        \begin{array}{cc}
            u & v \\
            v^\ast & u
        \end{array}
        \right] 
    \end{eqnarray}
    with state parameters being
    \begin{eqnarray}
        n= |y|^2+|w|^2, \qquad m = xy+zw , \nonumber \\
        v= 2 \, \text{Re} (y^\ast w)\equiv 2 \, \text{Re} (x^\ast z), \qquad 
        u = x w + z y.
        \label{eq:uv}
    \end{eqnarray}

\textbf{Example (Bishop-Vourdas):} An example of a balanced Bogoliubov transformation is obtained as follows. We select the following unitaries,
\begin{eqnarray}
    R_i(\phi)=e^{i\phi a_i^\ast a_i}, \quad
    U_i(\rho ) = e^{- \frac{\rho}{4}(a_i^2-a_i^{\ast 2})}, \quad
    V(r) = e^{-\frac{1}{2}(a_1a_2-a_1^\ast a_2^\ast)}
\end{eqnarray}
and set $U=R_1(\theta/2)R_2(\theta /2) U_1 (\rho) U_2 (\rho ) V(r)$. Then $b_i = U^\ast a_iU$ is a balanced Bogoliubov transformation \cite{Bishop} with
\begin{eqnarray}
    x= e^{i \theta /2} c_rc_\rho,\quad
    y= e^{i \theta /2} c_rs_\rho,\quad 
    z= e^{i \theta /2} s_r s_\rho,\quad
    w= e^{i \theta /2} s_r c_\rho,
    \label{eq:BV}
\end{eqnarray}
where $c_r= \cosh (r/2), s_r=\sinh (r/2), \, c_\rho = \cosh (\rho/2), s_\rho = \sinh (\rho/2)$. In this case, $n=c_r^2s_\rho^2+s_r^2c_\rho^2$, $m=e^{i\theta} (c_r^2+s_r^2)c_\rho s_\rho$. (One may check that $\Delta = \frac{1}{4}\sinh^2 r$ in this case.)

\bigskip

\textbf{Example (Hellmich-Honegger-K\"{o}stler-K\"{u}mmerer-Rieckers):} An alternative one is the following one \cite{HKKR}:
\begin{eqnarray}
    x =\kappa_+ (\frac{\rho}{2}+\frac{m}{2}+\frac{1}{4}), \quad
    y  =\kappa_+ (\frac{\rho}{2}+\frac{m}{2}-\frac{1}{4}), \nonumber \\
    z =\kappa_- (\frac{\rho}{2}-\frac{m}{2}-\frac{1}{4}), \quad
    w =\kappa_- (\frac{\rho}{2}-\frac{m}{2}+\frac{1}{4}) .    
    \label{eq:xyzw_H2K2R}
\end{eqnarray}
where $\rho = \sqrt{|m|^2 + \frac{1}{4}}$ and $\kappa_\pm=\sqrt{\frac{n+ \frac{1}{2} \pm \rho }{\rho (\rho \pm \mathrm{Re} \, m)}}$. These lead to a balanced pair of modes $b_1$ and $b_2$ which separately have a Gaussian state with the same $n$ and $m$ (assumed sub-maximal).

\bigskip

We note the general feature that the two modes are correlated ($K_{12 } \neq 0$) when the squeezing is sub-maximal.

Balanced Bogoliubov transformations (\ref{eq:BalBog}) are readily inverted as follows: 
    \begin{eqnarray}
        a_1 &=& x ^\ast b_1 + z^\ast b_2 -y\, b_1^\ast - w \, b_2 ^\ast ,\nonumber \\
        a_2 &=& z ^\ast b_1 + x^\ast b_2- w\, b_1^\ast - y \, b_2^\ast .
    \end{eqnarray}

%%%%%%%%%%%%%%%%%%%%%%%%%%%%%%%%%%%%%%%%%%%%%%%%%%%%%%%%%%%%%%%%%%%%%%%%    
%%%%%%%%%%%%%%%%%%%%%%%%%%%%%%%%%%%%%%%%%%%%%%%%%%%%%%%%%%%%%%%%%%%%%%%%    
%%%%%%%%%%%%%%%%%%%%%%%%%%%%%%%%%%%%%%%%%%%%%%%%%%%%%%%%%%%%%%%%%%%%%%%%    
%%%%%%%%%%%%%%%%%%%%%%%%%%%%%%%%%%%%%%%%%%%%%%%%%%%%%%%%%%%%%%%%%%%%%%%%    
%%%%%%%%%%%%%%%%%%%%%%%%%%%%%%%%%%%%%%%%%%%%%%%%%%%%%%%%%%%%%%%%%%%%%%%%    

\section{Symplectic Structure on Hilbert Spaces}
\label{sec:symp}
This section recalls content which is fairly standard and we follow several sources \cite{HL85,Belton,Bhat,HKKR}
Let $\mathfrak{h}$ be a separable complex Hilbert space and let us decompose its inner product into
\begin{eqnarray}
    \langle \phi | \psi \rangle = g(\phi , \psi ) + i \, \sigma (\phi , \psi )
\end{eqnarray}
where
\begin{eqnarray}
    \text{(metric form)}&& \quad g (\phi , \psi )= \text{Re}\langle \phi |\psi \rangle , \nonumber \\
    \text{(symplectic form)}&& \quad \sigma (\phi , \psi )= \text{Im}\langle \phi |\psi \rangle .
\end{eqnarray}
We note that $g$ and $\sigma$ are symmetric and anti-symmetric forms, respectively, and that both are non-degenerate. Moreover, they are only \textit{real} linear in their arguments though are related by
\begin{eqnarray}
    g(\phi , \psi )= \sigma (\phi , i \psi ) , \qquad
    \sigma (\phi , \psi) = g( i \phi , \psi) .
    \label{eq:g_sigma}
\end{eqnarray}

For a given set of vectors $D\subset \mathfrak{h}$, its symplectic complement is defined to be
\begin{eqnarray}
    D'\triangleq \{ \phi \in \mathfrak{h}:
    \sigma (\phi, \psi ) =0, \, \forall \psi \in D \}.
\end{eqnarray}
From the relation (\ref{eq:g_sigma}) between the metric and symplectic forms, we see that the symplectic complement $D'$ may alternatively be described as the $g$-orthogonal complement of $i D$.

In general, a real Hilbert subspace is said to be \textit{standard} if
\begin{eqnarray*}
    (i) &&\mathfrak{k}+i\mathfrak{k} \text{ is dense in }\mathfrak{k }, \text{ and}\\
    (ii) && \mathfrak{k}\cap i \mathfrak{k}\equiv \{0 \}.
\end{eqnarray*}
One may readily show that a real Hilbert subspace is standard if and only if its symplectic complement is standard.

\subsection{Complex Conjugations}
A complex conjugation $\jmath$ on $\mathfrak{h}$ is a bijection which is anti-unitary (i.e, $\langle \jmath \phi | \jmath \psi \rangle = \langle \psi | \phi \rangle$ for all $\phi,\psi \in \mathfrak{h}$) and involutive ($\jmath^2 = I$). (The familiar complex conjugation operation for standard representations of Hilbert spaces being the simplest example.) It follows that $\jmath^{-1}\equiv \jmath^\ast$. Complex conjugations are easily seen to be real linear, but not complex linear; indeed, $\jmath c\psi = c^\ast \jmath \psi$ for complex scalars $c$.

The conjugation leaves the metric invariant, that is, $g(\jmath \phi , \jmath \psi ) =g(\phi , \psi )$. It similarly introduces a sign change for the symplectic form.

For each vector $\psi \in \mathfrak{h}$, we introduce the \lq\lq quadrature\rq\rq\, vectors
\begin{eqnarray}
    \psi_q \triangleq \frac{1}{2}\psi  + \frac{1}{2} \jmath \psi  , \qquad
    \psi_p \triangleq \frac{1}{2}\psi  - \frac{1}{2} \jmath \psi 
\end{eqnarray}
and these have the symmetries $\jmath \psi_q= \psi_q$ and $\jmath \psi_p = - \psi_p$. Let us write $\mathfrak{h}_q$ and $\mathfrak{h}_p$ for the closure of $\{\psi_q : \psi \in \mathfrak{h}\}$ and $\{ \psi_p : \psi\in \mathfrak{h}\}$ respectively. These are \textit{real} Hilbert spaces which we refer to as the quadrature spaces corresponding to the complex conjugation $\jmath$. As every vector admits the decomposition $\psi = \psi_q+\psi_p$, we have $\mathfrak{h}=\mathfrak{h}_q+\mathfrak{h}_p$. We also see that the only vector common to both quadrature spaces is the zero vector. Therefore, $\mathfrak{h}_q$ is standard, as is $\mathfrak{h}_p$.

As $\jmath i \psi =-i \jmath \psi$, we see that $\mathfrak{h}_q \equiv i \mathfrak{h}_p$ and $\mathfrak{h}_p \equiv i \mathfrak{h}_q$. In fact, the conjugation is determined by specifying its $q$-quadrature space: equivalently specifying a complete basis. We may identify $\mathfrak{h}$ as either $\mathfrak{h}_q+ \mathfrak{h}_p$, or as $\mathfrak{h}_q + i \mathfrak{h}_q$. The subspaces are not generally orthogonal, but are $g$-orthogonal: let $\phi \in\mathfrak{h}_q$ and $\psi \in\mathfrak{h}_p$ then $g(\phi,\psi)=g(\jmath \phi, \jmath \psi)=-g(\phi,\psi)$.

Let $A$ be a (complex) linear operator on $\mathfrak{h}$ with adjoint $A^\ast$, then we define its conjugation and transpose, respectively, by
\begin{eqnarray}
    A^\# = \jmath A\jmath, \qquad A^\top = \jmath A^\ast \jmath .
\end{eqnarray}

It is convenient to introduce the \lq\lq complexified\rq\rq\, Hilbert space $\mathfrak{h}\oplus\jmath \mathfrak{h}$. We take $\breve{\mathfrak{h}}$ as the closed real subspace spanned by vectors of the form
\begin{eqnarray}
    \Breve{\psi}
    = \left[
    \begin{array}{c}
        \psi  \\
         \jmath \psi 
    \end{array}
    \right], \qquad \psi \in \mathfrak{h}.
\end{eqnarray}

\subsection{Symplectomorphisms}

The linear symplectomorphisms form the set $Sp(\mathfrak{h})$ the real linear bijections $S:\mathfrak{h}\mapsto\mathfrak{h} $ leaving the symplectic form invariant, that is, $\sigma (S,\phi,S\psi)= \sigma (\phi , \psi)$.

Every linear symplectomorphism can be written as $S\phi = S_- \phi + S_+ \jmath \phi$ where $S_\pm$ are (complex) linear operators on $\mathfrak{h}$.

The symplectomorphism $S$ then naturally induces an operator  $\breve{S}$ on $ \mathfrak{h}\oplus\jmath \mathfrak{h}$ given by
\begin{eqnarray}
    \breve{S} =
    \left[  
    \begin{array}{cc}
         S_-& S_+  \\
         S_+^\#& S_-^\#
    \end{array}
    \right] ,
\end{eqnarray}
so that $\phi = S\psi$ implies $\breve{\phi}= \breve{S} \breve{\psi}$.

The condition that $S$ be a symplectomorphism may be equivalently stated as $\breve{S}$ being invertible on $ \mathfrak{h}\oplus\jmath \mathfrak{h}$ with inverse
\begin{eqnarray}
    \breve{S}^{-1} =
    \left[  
    \begin{array}{rr}
         S_-^\ast& - S_+^\top  \\
         -S_+^\ast& S_-^\top
    \end{array}
    \right] .
\end{eqnarray}

\subsection{Second Quantization}
The CCR algebra over a separable Hilbert space $\mathfrak{h}$ is the C*-algebra generated by elements $w(\phi)$, for $\phi \in \mathfrak{h}$, with the relations $w(\phi) w (\psi) = e^{-i \sigma (\phi , \psi )} w(\phi+\psi )$ and $w(\psi)^\ast = w (-\psi)$. By a result of Slawny, the CCR algebra is unique up to isomorphism and will be denoted as $CCR(\mathfrak{h})$.

The algebra has the Weyl-Fock representation $\pi_{\text{Fock}}$ on the Fock space $\Gamma (\mathfrak{h})$ with $w(\psi )$ represented by the Weyl-Fock operator $W(\psi)$. The Fock vacuum $|\text{vac}\rangle$ is cyclic for this representation and induces the state $\langle \text{vac} | W(\phi ) \, \text{vac} \rangle =e^{- \frac{1}{2} \| \phi \|^2 }.$ We note that the Weyl-Fock operator
\begin{eqnarray}
    W(\phi) = e^{a(\phi) - a^\ast (\phi )}
\end{eqnarray}
where $a^\ast (\cdot)$ and $a(\cdot)$ are the (complex-linear) creation operator and the (complex-anti-linear) annihilation operator on $\Gamma (\mathfrak{h})$, respectively.

If $\mathfrak{k}$ is a real subspace of $\mathfrak{h}$ then we may similarly define $CCR (\mathfrak{k})$ which we identify with its representation as a subalgebra of the Weyl-Fock representation. The corresponding von Neumann algebra is
\begin{eqnarray}
    \mathfrak{M}(\mathfrak{k})= \pi_{\text{Fock}} (CCR (\mathfrak{k}))''.
\end{eqnarray}
(Here $\mathfrak{M}'$ denotes the commutant of an algebra with the bounded operators on $\mathfrak{h}$.) 

A key result is the \textit{Duality Theorem} which states that, for closed standard subspaces $\mathfrak{k}$, 
\begin{eqnarray}
    \mathfrak{M} (\mathfrak{k})'= \mathfrak{M}(\mathfrak{k}') 
\end{eqnarray}
and that, moreover, the Fock vacuum is cyclic and separating for both $\mathfrak{M} (\mathfrak{k})$ and $\mathfrak{M} (\mathfrak{k})'$.
In particular, condition \textit{(i)} implies that $\mathfrak{M} (\mathfrak{k}')\subset \mathfrak{M} (\mathfrak{k})'$ while condition \textit{(ii)} ensures equality.

\subsection{Representations Inducing Quasi-Free States}
The general (multi-mode) case is outlined in \cite{Belton} and utilizes the general form for Bogoliubov transformations derived in \cite{HR}. However, for our purposes, it suffices to start with a simpler two-mode squeezing problem and fix a balanced set of complex parameters $(x,y,z,w)$ as in Definition \ref{def:BBT}. In analogy with the two-mode squeezing in (\ref{eq:BalBog}, we set
\begin{eqnarray}
    b(\phi ) &=& x\, a (\phi)\otimes I + z\, I\otimes a(\phi) +y\, a^\ast(\jmath \phi) \otimes I +w\, I \otimes a^\ast (\jmath \phi ) , \nonumber \\
    b'(\phi ) &=& z\, a (\phi)\otimes I + x\, I\otimes a(\phi) +w\, a^\ast(\jmath \phi) \otimes I +y\, I \otimes a^\ast (\jmath \phi ).
\end{eqnarray}
Note that we need to include the conjugation $\jmath$ to ensure that $b(\cdot )$ is overall anti-linear. The new operator should be thought of as acting on the Hilbert space
\begin{eqnarray}
    \Gamma (\mathfrak{h}\oplus \jmath \mathfrak{h}) \cong \Gamma (\mathfrak{h} )\otimes \Gamma ( \jmath \mathfrak{h}).
\end{eqnarray}
We see that 
\begin{eqnarray}
    b(\phi ) - b^\ast (\phi ) = a(x^\ast \, \phi - y \jmath  \phi ) \otimes I
    + I \otimes a^\ast ( z^\ast \, \phi -w\, \jmath\phi ) - \text{H.c.}
\end{eqnarray}
and it follows that
\begin{eqnarray}
    \pi \left( e^{b(\phi ) - b^\ast (\phi )} \right) \equiv W(x^\ast \, \phi - y \jmath  \phi ) \otimes
    W ( z^\ast \, \phi -w\, \jmath\phi ) .
\end{eqnarray}
gives a representation of the CCR for which the joint Fock vacuum $|\text{vac}\rangle \otimes | \text{vac} \rangle$ is cyclic and separating and induces a Gaussian state.

To see this in more detail, let us introduce the following operators (understood as acting on $\mathfrak{h}\oplus \jmath \mathfrak{h}$)
\begin{eqnarray}
    \Sigma 
     = \left[ 
    \begin{array}{cc}
        x^\ast & -y \\
        z^\ast & -w
    \end{array}\right] , 
    \qquad
    \Sigma^\prime  =
    \left[ 
    \begin{array}{cc}
        z^\ast & -w \\
        x^\ast & -y
    \end{array}\right] .
\end{eqnarray}
By the symmetries inherent to balanced Bogoliubov transformations, we see that $(x^\ast,-y,-z,w^\ast )$ again yields the parameters of a Bogoliubov transformation. (That is, the requirements \ref{eq:Bod_ids} are invariant under the replacement $x\to x^\ast , z \to z^\ast , y \to -y, w \to -w$.)

Associated with these are the real subspaces $\mathfrak{k}$ and $\mathfrak{k}'$ obtained as the closure of the sets $\{ \Sigma \check \psi: \psi \in \mathfrak{h} \}$ and $\{ \Sigma' \check \psi: \psi \in \mathfrak{h} \}$, respectively. These subspaces are symplectic complements to each other.

%%%%%%%%%%%%%%%%%%%%%%%%%%%%%%%%%%%%%%%%%%%%%%%%
%%%%%%%%%%%%%%%%%%%%%%%%%%%%%%%%%%%%%%%%%%%%%%%%
%%%%%%%%%%%%%%%%%%%%%%%%%%%%%%%%%%%%%%%%%%%%%%%%

\subsection{Non-Fock Quantum Stochastic Calculus}
\label{sec:nonfock}
The Fock space $\Gamma (\mathfrak{h})$ over a given one-particle space $\mathfrak{h}$ is the direct sum of symmetrized $n$-particle spaces $\otimes_{\text{symm.}}^n \mathfrak{h}$. The traditional bosonic creation and annihilation operators with test-function $f\in \mathfrak{h}$ are denoted as $A(f)$ and $A^\ast(f)$, respectively.

The Hudson-Parthasarathy quantum stochastic calculus uses the one-particle space $L^2 (\mathbb{R}_+,dt)$ one defines the annihilation process to be the operators $A_t \triangleq A(1_{[0,t]})$ for $t\ge 0$. We therefore have the canonical commutation relations
\begin{eqnarray}
    [ A_t , A_s^\ast ]= \min (t,s) .
\end{eqnarray}

The increments are understood in the Ito sense as future-pointing: that is, informally $dA_t = A_{t+dt} -A_t$ for $dt>0$, etc. Central to the calculus is the quantum Ito table:
\begin{eqnarray}
    %\begin{table}[]
        \centering
        \begin{tabular}{c|cc}
           $ \times$ & $dA$ & $dA^\ast$  \\
            \hline
             $dA$   &  0 & $dt$ \\
             $dA^\ast$ &0 &0
        \end{tabular}
        .
        % \end{table}
\end{eqnarray}

To obtain non-Fock noise, we fix a Bogoliubov matrix $S\in Sp(2)$ and on the double Fock space $\Gamma (L^2(\mathbb{R}_+,dt) \otimes \Gamma (L^2(\mathbb{R}_+,dt)$ set
\begin{eqnarray}
    B_t &=& x_1 \, A_t \otimes I + y_1 \, A_t^\ast \otimes I + z_1 \, I \otimes A_t + w_1 \, I \otimes A_t^\ast , \nonumber \\
    B'_t  &=& x_2 \, A_1 \otimes I + y_2 \, A_t^\ast \otimes I + z_2 \, I \otimes A_t + w_2 \, I \otimes A_t^\ast ,
    \label{eq:2mode_noise}
\end{eqnarray}
where we use the entries from (\ref{eq:2mode}). The processes $B_t , B^\ast_t$ commute with the $B_t', B_t^{\prime\ast}$ and together they satisfy the quantum Ito table
\begin{eqnarray}
            \centering
        \begin{tabular}{c|cc}
           $ \times$ & $dB^\ast \quad dB$ & $dB^{\prime\ast} \quad dB'$  \\
            \hline
             $\begin{array}{c}
                  dB \\
                  dB^\ast
             \end{array}$   &  $C_1 \, dt$ & $K_{12} \, dt$ \\
             $\begin{array}{c}
                  dB' \\
                  dB^{\prime\ast}
             \end{array}$ & $K_{21} \, dt $ & $C_2 \, dt$
        \end{tabular}
    \end{eqnarray}
with the matrix blocks given by \ref{eq:correlationmatrices}). We shall write $(n,m)$ for the parameters $(n_1,m_1)$ of the first noise and $(n',m')$ for the parameters $(n_2,m_2)$ of the second.

\bigskip

A unitary adapted process $U_t$ can now be obtained from a quantum stochastic differential equation of the form
\begin{eqnarray}
    dU_t = \bigg\{ L\otimes dB_t^\ast - L^\ast \otimes dB_t
    +K \otimes dt \bigg\} U_t ,
    \label{eq:U_QSDE}
\end{eqnarray}
where $dt$-term takes the form
\begin{eqnarray}
    K=- \frac{1}{2} (n+1) L^\ast L - \frac{1}{2} n LL^\ast 
    -\frac{1}{2} m L^{\ast 2} - \frac{1}{2}m^\ast L^2
    - i H.
    \label{eq:K}
\end{eqnarray}

For $X$ an observable on $\mathfrak{h}_0$, we define its evolution under the unitary quantum stochastic dynamics as
\begin{eqnarray}
    j_t (X) = U_t^\ast (X \otimes I) U_t.
\end{eqnarray}
From the quantum Ito calculus, on has
\begin{eqnarray}
    dj_t (X) = j_t ( [X,L]) \otimes dB^\ast_t
    +j_t ([L^\ast ,X] \otimes dB_t
    +j_t ( \mathcal{L} X ) \otimes dt
\end{eqnarray}
where we encounter the Lindblad generator 
\begin{eqnarray}
    \mathcal{L} X &=& (n+1)
    \left\{\frac{1}{2}  [L^\ast ,X]L + \frac{1}{2} L^\ast [X,L ]\right\} 
    + n\left\{\frac{1}{2}  [L ,X]L^\ast + \frac{1}{2} L [X,L^\ast ]\right\} \nonumber \\
    &&+m\left\{\frac{1}{2}  [L^\ast ,X]L^\ast + \frac{1}{2} L^\ast [X,L^\ast ]\right\}
    +m^\ast \left\{\frac{1}{2}  [L ,X]L + \frac{1}{2} L [X,L  ]\right\} \nonumber \\
    &&   -i[X,H] .
    \label{eq:L}
\end{eqnarray}

\begin{proposition}
The output process is defined as $B_t^{\mathrm{out}}= U^\ast_t (I \otimes B_t ) U_t$ and we have
\begin{eqnarray}
    dB_t^{\mathrm{out}} = dB_t + j_t (L) dt .
    \label{eq:io}
\end{eqnarray}
\end{proposition}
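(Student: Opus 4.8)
The plan is to compute $dB_t^{\mathrm{out}}$ directly via the quantum Itô product rule applied to the triple product $U_t^\ast (I\otimes B_t) U_t$, exactly as one does for the input-output relation in the standard Fock setting, but now keeping track of the fact that $B_t$ itself is a linear combination of the Fock noises $A_t\otimes I$, $A_t^\ast\otimes I$, $I\otimes A_t$, $I\otimes A_t^\ast$ with the balanced coefficients $x,y,z,w$ (written there as $x_1,y_1,z_1,w_1$). First I would write $d(U_t^\ast X_t U_t) = (dU_t^\ast) X_t U_t + U_t^\ast (dX_t) U_t + U_t^\ast X_t (dU_t) + (dU_t^\ast)(dX_t) U_t + (dU_t^\ast) X_t (dU_t) + U_t^\ast (dX_t)(dU_t)$ with $X_t = I\otimes B_t$; here $dX_t = dB_t$ has no $dt$ component, only the martingale increments, so the terms simplify.

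Next I would substitute $dU_t = \{L\otimes dB_t^\ast - L^\ast\otimes dB_t + K\otimes dt\}U_t$ and its adjoint $dU_t^\ast = U_t^\ast\{L^\ast\otimes dB_t - L\otimes dB_t^\ast + K^\ast\otimes dt\}$. The zeroth-order terms (the ones without any Itô correction) give $U_t^\ast(dB_t)U_t = dB_t$ after noting that $B_t$, being in the "system-free" noise algebra, commutes through $U_t$ in the appropriate sense — actually more carefully $U_t^\ast(I\otimes dB_t)U_t$ is just $dB_t$ since the increment $dB_t$ is future-pointing and $U_t$ is adapted; more precisely the drift-free martingale parts of $(dU_t^\ast)X_tU_t + U_t^\ast X_t(dU_t)$ cancel against each other up to the cross terms. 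The crucial contribution comes from the Itô cross terms $(dU_t^\ast)(dX_t)U_t$ and $U_t^\ast(dX_t)(dU_t)$, evaluated using the extended Itô table for the $B,B^\ast$ increments, i.e. $dB_t\,dB_t^\ast = (n+1)\,dt$, $dB_t^\ast\,dB_t = n\,dt$, $dB_t\,dB_t = m\,dt$, $dB_t^\ast\,dB_t^\ast = m^\ast\,dt$ (the $C_1$ block of the correlation table). Pairing $dB_t$ (from $dX_t$) with the $L\otimes dB_t^\ast$ and $-L^\ast\otimes dB_t$ pieces of $dU_t$ on the right, one gets a term $j_t(\cdot)$ times $dB_t\,dB_t^\ast = (n+1)dt$ and $dB_t\,dB_t = m\, dt$; similarly pairing with $dU_t^\ast$ on the left produces the conjugate combinations.

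The key point I expect to be the main obstacle — or rather the thing requiring care — is showing that all the squeezing-dependent contributions ($n$, $m$, $m^\ast$) from the Itô cross terms combine with the $K\otimes dt$ drift term of $dU_t^\ast$ and $dU_t$ in such a way that everything collapses to the single clean term $j_t(L)\,dt$, with the $(n+1),n,m,m^\ast$ pieces exactly cancelling. This is precisely why $K$ was defined with those coefficients in $\erf{eq:K}$: the drift $U_t^\ast(K^\ast + K)\otimes dt\, U_t\cdot(\text{coefficient structure of }B_t)$ plus the Itô corrections must telescope. Concretely, I would collect the $dt$-coefficient of $dB_t^{\mathrm{out}}$ and verify it equals $U_t^\ast(L\otimes I)U_t = j_t(L)$: the pieces proportional to $(n+1)L$, $nL$ (from $L^\ast$-type commutators), $mL^\ast$, $m^\ast L$ arising from the cross terms must be annihilated by the corresponding pieces hidden inside $K$ and $K^\ast$, leaving only the bare $L$ from the $L\otimes dB_t^\ast$ piece paired with $dB_t$ contributing $(n+1)$ — wait, more carefully, leaving only the coefficient-$1$ survivor. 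A sanity check is the thermal/vacuum limit $m=0$, $n=0$, where the computation reduces to the textbook result $dB_t^{\mathrm{out}} = dB_t + j_t(L)\,dt$. Assembling these observations, the drift is identified as $j_t(L)\,dt$ and there is no martingale correction beyond $dB_t$ itself, which establishes $\erf{eq:io}$.
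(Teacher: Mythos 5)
Your approach---expanding $d\big(U_t^\ast (I\otimes B_t) U_t\big)$ with the quantum It\^{o} product rule, using adaptedness to get $U_t^\ast (I\otimes dB_t) U_t = dB_t$, and evaluating the cross terms with the squeezed It\^{o} table $dB\,dB^\ast=(n+1)dt$, $dB^\ast dB=n\,dt$, $dB\,dB=m\,dt$, $dB^\ast dB^\ast=m^\ast dt$---is exactly the paper's, and your conclusion is correct. The one point where your sketch misidentifies the mechanism is the role of $K$: the $K\otimes dt$ pieces of $dU_t$ and $dU_t^\ast$ never meet the increment $dB_t$ (since $dt\,dB_t=0$), so there is no ``telescoping'' of the $n,m$-dependent It\^{o} corrections against pieces hidden inside $K$ and $K^\ast$. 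The cancellation happens entirely between the two cross terms themselves: $(dU_t^\ast)(I\otimes dB_t)U_t$ contributes $U_t^\ast\big(m\,L^\ast-n\,L\big)U_t\,dt$, while $U_t^\ast(I\otimes dB_t)\,dU_t$ contributes $U_t^\ast\big((n+1)L-m\,L^\ast\big)U_t\,dt$, and their sum is already $j_t(L)\,dt$; this is the paper's one-line computation $(n+1)L-nL+mL^\ast-mL^\ast=L$. The terms containing $K$ and $K^\ast$ occur only in the pieces proportional to $B_t$ itself, namely $(dU_t^\ast)(I\otimes B_t)U_t+U_t^\ast(I\otimes B_t)\,dU_t+(dU_t^\ast)(I\otimes B_t)\,dU_t$, and these vanish as a block by unitarity, $d(U_t^\ast U_t)=0$, after commuting $B_t$ through the coefficients (which involve only system operators and future-pointing increments); no use of the specific $(n,m)$-dependent form of $K$ in (\ref{eq:K}) is required for this proposition, only that $U_t$ is unitary. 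With that correction your plan goes through and coincides with the paper's proof.
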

\begin{proof}
    While (\ref{eq:io}) is formally identical to the $n=0$ case, the derivation deserves some attention. Retaining the non-trivial quantum Ito corrections, we find
    \begin{eqnarray}
        dB_t^{\text{out}} &=& dB_t + (dU_t^\ast) (I \otimes dB_t) U_t + U_t^\ast (I \otimes dB_t) dU_t
        \nonumber \\
        &=& dB_t + U_t^\ast \big( (n+1)L -n\, L +m\, L^\ast -m\, L^\ast \big) dt, 
    \end{eqnarray}
    so that $n,m$-dependent terms cancel leaving the Fock vacuum expression.
\end{proof}

%%%%%%%%%%%%%%%%%%%%%%%%%%%%%%%%%%%%%%%%%%%%%%%%%%%%%%%%%%%%%%%%%%%%%%%%%%%%%%%%%%%%%%
%%%%%%%%%%%%%%%%%%%%%%%%%%%%%%%%%%%%%%%%%%%%%%%%%%%%%%%%%%%%%%%%%%%%%%%%%%%%%%%%%%%%%%
%%%%%%%%%%%%%%%%%%%%%%%%%%%%%%%%%%%%%%%%%%%%%%%%%%%%%%%%%%%%%%%%%%%%%%%%%%%%%%%%%%%%%%
%%%%%%%%%%%%%%%%%%%%%%%%%%%%%%%%%%%%%%%%%%%%%%%%%%%%%%%%%%%%%%%%%%%%%%%%%%%%%%%%%%%%%%
%%%%%%%%%%%%%%%%%%%%%%%%%%%%%%%%%%%%%%%%%%%%%%%%%%%%%%%%%%%%%%%%%%%%%%%%%%%%%%%%%%%%%%
%%%%%%%%%%%%%%%%%%%%%%%%%%%%%%%%%%%%%%%%%%%%%%%%%%%%%%%%%%%%%%%%%%%%%%%%%%%%%%%%%%%%%%

\section{Quantum Filtering for Squeezed Noise}
\label{sec:filter}
Without loss of generality, we take the state of the system to be the pure state corresponding to a vector $\phi$ and with an abuse of notation just write $\mathbb{E}$ for the separable state where the system is in state $\phi $ and the bath is in the squeezed state. The latter then corresponds to the joint Fock vacuum $\Psi = \Phi \otimes \Phi$ for the $A$-noises.

We shall consider measurements of the output quadrature $Y_t = B_t^{\text{out}} +B_t^{\text{out}\ast} $. The von Neumann algebra $\mathfrak{Y}_t$ generated by the observables $\{ Y_s: s\in [0,t]\}$ is commutative. Our aim is to calculate the conditional expectation $\pi_t (X)$ of an observable $j_t (X)$ onto the measurement algebra $\mathfrak{Y}_t$:
\begin{eqnarray}
    \pi_t (X) \triangleq \mathbb{E} [j_t (X)|\mathfrak{Y}_t].
\end{eqnarray}
This conditional expectation is guaranteed to exist since we have the nondemolition property $j_t(X) \in \mathfrak{Y}_t'$ and so one may use the standard spectral theorem construction from \cite{BvH}.

The output quadrature may be written as $Y_t = U_t^\ast  ( I \otimes Z_t) U_t$ where $Z_t = (B_t+B_t^\ast)$. The von Neumann algebra generated by the observables $\{ Z_s: s\in [0,t]\}$ is likewise commutative and is denoted $\mathfrak{Z}_t$. We note that 
\begin{eqnarray}
    dY_t = dZ_t + j_t (L+L^\ast) \, dt
\end{eqnarray}
and that
\begin{eqnarray}
    dY_t \, dY_t = dZ_t \, dZ_t =(2n+1+2\, \text{Re}\,m) \, dt.
    \label{eq:dY}
\end{eqnarray}

\subsection{Fixing A Commuting Quadrature}
We now come to a critical part of the construction. For a given phase $\lambda\in [0 ,2\pi )$, we may define a quadrature of the commutant field by $Z_t'=e^{i\lambda }B_t'+e^{-i\lambda} B_t^{\prime \ast}$.

The processes $Z$ and $Z'$ are self-adjoint commuting processes and therefore are essentially classical and may be embedded in the same Kolmogorov model. Moreover, for the choice of the joint vacuum state, they are jointly Gaussian. Let us suppose that we work with a balanced Bogoliubov representation, then we see that $Z'$ is mean zero with $dZ'_t \, dZ_t' = \big( 2n+1+2\, \text{Re}\, (e^{i2\lambda}m) \big) \, dt$. However, we also see that
\begin{eqnarray}
    dZ_tdZ_t'=dZ_t'dZ_t = 2 \bigg( \cos \lambda \,\big(v+ \text{Re}u \big)-\sin\lambda \, \text{Im}u \bigg) \, dt ,
\end{eqnarray}
where $u,v$ are the state parameters appearing in (\ref{eq:uv}).

\bigskip

It is clear that the choice $\lambda = \tan^{-1} \left( \frac{v+ \text{Re}u }{\text{Im}u} \right)$, modulo $\pi$, ensures that $Z$ and $Z'$ are uncorrelated and, moreover, statistically independent. In the special case where $\text{Im} \, u =0$, we understand that $\lambda \equiv \pi/2$.

\begin{definition}[Standing Assumptions]
    We shall say that the standing assumptions are met whenever i) the underlying Bogoliubov transformation behind the representation is balanced and ii) the phase parameter $\lambda$ has been chosen so that $Z$ and $Z'$ are statistically independent.
\end{definition}

For instance, for the coefficients in (\ref{eq:BV}) we should take $\lambda = \tan ^{-1} (\frac{2 \tau + \cos \theta}{\sin \theta})$ where $ \frac{1}{2} < \tau \le 1$
is given by $\tau = \frac{c_\rho^2}{c_\rho^2 +s_\rho^2}$. The choices for $\lambda$ are plotted in Figure \ref{fig:phase_Lambda}.

In the thermal case $(x=\sqrt{n+1},y=z=0,w=\sqrt{n})$, we find that $u=\sqrt{(n+1)n}$ and $v=0$. In this case, the choice is $\lambda = \pm \pi /2$.

\begin{remark}
    In the following, we are going to consider a measurement of both of the output quadratures corresponding to $Z$ and $Z'$. We aim to average out $Z'$ and this is why we took it to be statistically independent of $Z$. In physicist's language, we are fixing a representation and then choosing a suitable \lq\lq unraveling\rq\rq\, of the associated master equation - one corresponding to two independent quadratures. We have fixed $Z$, and so it was then a question of finding an appropriate $Z'$.

 It should be stressed that the only actual measurement that physically takes place is that of $Y$, the output quadrature corresponding to $Z$. However, unraveling supposes that we also measure the output corresponding to $Z'$ (which will be compatible) and average it out through the process of conditional expectation. 

    An alternative strategy is to fix $Z'$ as a specific quadrature (i.e., fix $\lambda$) and then select an appropriate representation (i.e, choose the parameters $(x,y,z,w)$ so that the quadratures are uncorrelated).
\end{remark}

\begin{figure}
    \centering
    \includegraphics[width=0.75\linewidth]{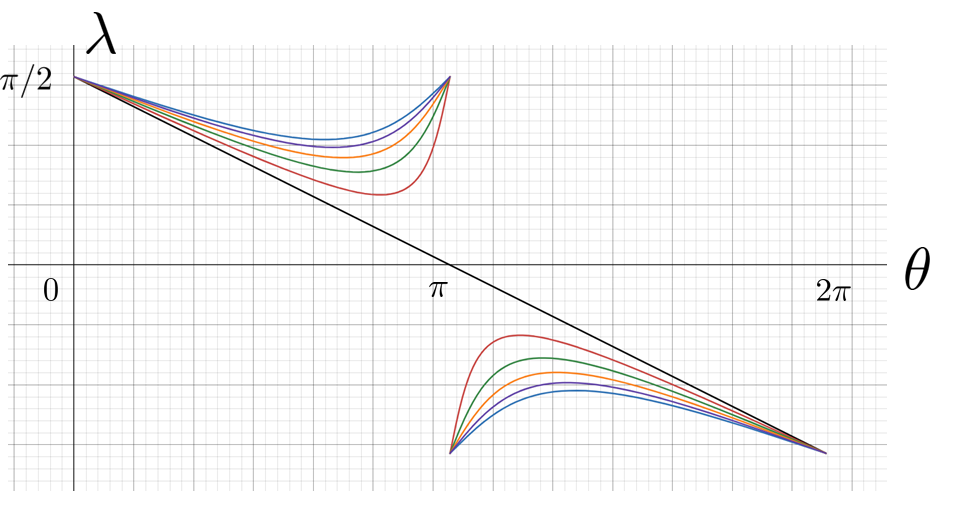}
    \caption{(color online) A plot of $\lambda$ versus $\theta$ for values of $\tau =0.5$ (black), $0.6$ (red), $0.7$ (green), $0.8$ (orange), 0.9 (purple), and 1.0 (blue).}
    \label{fig:phase_Lambda}
\end{figure}

\subsection{The Quantum Kallianpur-Striebel Formula}
The annihilators $A_t\otimes I$ and $I\otimes A_t$ kill the joint vacuum state $\Psi= \Phi \otimes \Phi$, but this will not be the case for the process $B_t$. In fact, we find
\begin{eqnarray}
\left[ 
    \begin{array}{c}
         dB  \\
        dB^\ast
    \end{array}
    \right] \Psi =
    \left[ 
    \begin{array}{cc}
        y_1 & w_1  \\
        x_1^\ast & z_1^\ast
    \end{array}
    \right] 
    \left[ 
    \begin{array}{l}
         dA^\ast \otimes I  \\
        I \otimes dA^\ast
    \end{array}
    \right] \Psi .
\end{eqnarray}
Similarly, 
\begin{eqnarray}
    \left[ 
    \begin{array}{l}
         dZ  \\
        dZ^\prime
    \end{array}
    \right] \Psi =
    \left[ 
    \begin{array}{cc}
       x_1^\ast + y_1 & z_1^\ast + w_1  \\
        x_2^\ast +y_2 & z_2^\ast +w_2
    \end{array}
    \right] 
    \left[ 
    \begin{array}{c}
         dA^\ast \otimes I  \\
        I \otimes dA^\ast
    \end{array}
    \right] \Psi .
\end{eqnarray}
As a consequence, we find that
\begin{eqnarray}
    \left[ 
    \begin{array}{c}
         dB  \\
        dB^\ast
    \end{array}
    \right] \Psi =
    \left[ 
    \begin{array}{cc}
        \alpha & \beta  \\
        \gamma & \delta 
    \end{array}
    \right]
    \, 
     \left[ 
    \begin{array}{l}
         dZ  \\
        dZ^\prime
    \end{array}
    \right] \Psi .
\end{eqnarray}

\begin{proposition}
\label{prop:alpha}
    In the case of a balanced representation, we have $\alpha + \gamma =1$ and $\beta = - \delta$ for the arbitrary choice of phase $\lambda$. We, moreover, have the identities
    \begin{eqnarray}
        (2n+1 + 2\text{Re}\, m ) |\alpha |^2 +  (2n+1 + 2\text{Re}\, (e^{2i \lambda }m)  ) |\beta |^2 &=& n ,
        \label{eq:id_n}\\
        (2n+1 + 2\text{Re}\, m ) |\gamma |^2 +  (2n+1 + 2\text{Re}\, (e^{2i \lambda }m)  ) |\delta |^2 &=& n+1 ,
        \label{eq:id_n+1}\\
        (2n+1 + 2\text{Re}\, m ) \gamma^\ast\alpha  +  (2n+1 +2 \text{Re}\, (e^{2i \lambda }m)  ) \delta^\ast \beta  &=& m ,
        \label{eq:id_m}
    \end{eqnarray}
    along with the explicit expressions
    \begin{eqnarray}
        \alpha = \frac{n+  m}{2n+1 +2 \,\text{Re} \, m}, \quad \gamma = \frac{n+1+ m^\ast }{2n+1 +2\, \text{Re} \, m}
        .
        \label{eq:alpha+gamma}
    \end{eqnarray}
\end{proposition}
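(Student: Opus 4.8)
The plan is to reduce the whole statement to two-by-two linear algebra on the two-dimensional subspace spanned by $dA_t^\ast\otimes I\,\Psi$ and $I\otimes dA_t^\ast\,\Psi$ in the double Fock space. The two displays just above the Proposition already write $dB_t\,\Psi$, $dB_t^\ast\,\Psi$, $dZ_t\,\Psi$ and $dZ_t'\,\Psi$ as combinations of those two vectors, since every annihilator kills the joint vacuum $\Psi$; the coefficient rows come straight from the balanced parameters $(x,y,z,w)$ (and the phase $\lambda$ for $dZ_t'$). Writing $P$ for the matrix of the $(dB_t,dB_t^\ast)$-rows and $M$ for the matrix of the $(dZ_t,dZ_t')$-rows, the matrix of the Proposition is $\begin{bmatrix}\alpha & \beta \\ \gamma & \delta\end{bmatrix} = PM^{-1}$. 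This presupposes $M$ invertible, i.e.\ $dZ_t\,\Psi$ and $dZ_t'\,\Psi$ linearly independent, which holds in the sub-maximal case; in the maximal case the representation factorizes by Proposition~\ref{prop:2} and there is no $Z'$ to speak of.

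The relations $\alpha+\gamma=1$ and $\beta+\delta=0$ I would read off without inverting anything. Since $Z_t = B_t + B_t^\ast$ identically, adding the two defining relations gives $dZ_t\,\Psi = (\alpha+\gamma)\,dZ_t\,\Psi + (\beta+\delta)\,dZ_t'\,\Psi$, and linear independence of $dZ_t\,\Psi$ and $dZ_t'\,\Psi$ forces the conclusion for every $\lambda$. Only the tautological link between $Z$ and $B$ is used here, which is why this half holds before any choice of $\lambda$.

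For the three quadratic identities I would compare inner products of increments on the vacuum, which coincide with the corresponding quantum Itô products of Section~\ref{sec:nonfock}: $\langle dB_t\,\Psi|dB_t\,\Psi\rangle = n\,dt$, $\langle dB_t^\ast\,\Psi|dB_t^\ast\,\Psi\rangle = (n+1)\,dt$, $\langle dB_t^\ast\,\Psi|dB_t\,\Psi\rangle = m\,dt$, together with the variances $\langle dZ_t\,\Psi|dZ_t\,\Psi\rangle$ and $\langle dZ_t'\,\Psi|dZ_t'\,\Psi\rangle$ (the former given by (\ref{eq:dY})). Under the standing assumptions $dZ_t\,\Psi$ and $dZ_t'\,\Psi$ are \emph{orthogonal}, $\langle dZ_t\,\Psi|dZ_t'\,\Psi\rangle = 0$ — this is exactly where the choice of $\lambda$ is used. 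Substituting $dB_t\,\Psi = \alpha\,dZ_t\,\Psi + \beta\,dZ_t'\,\Psi$ and $dB_t^\ast\,\Psi = \gamma\,dZ_t\,\Psi + \delta\,dZ_t'\,\Psi$ into the three products then yields (\ref{eq:id_n}), (\ref{eq:id_n+1}) and (\ref{eq:id_m}). To pin down the closed forms (\ref{eq:alpha+gamma}) I would add the one further scalar relation obtained by pairing $dB_t\,\Psi$ with $dZ_t\,\Psi$: orthogonality gives $\langle dZ_t\,\Psi|dB_t\,\Psi\rangle = \alpha\,\langle dZ_t\,\Psi|dZ_t\,\Psi\rangle$, while the left-hand side is the Itô product $dZ_t\,dB_t = (n+m)\,dt$; solving this with $\alpha+\gamma=1$ produces (\ref{eq:alpha+gamma}), and in particular shows $\alpha$ and $\gamma$ to be independent of $\lambda$, as needed for representation independence.

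The part I expect to be the main obstacle is not any one computation but making the manipulations rigorous: that $dB_t\,\Psi$, $dZ_t'\,\Psi$, etc., are honest vectors for which the stated vacuum inner products really are the quantum Itô-table entries (the $O(dt)$ bookkeeping), that $dZ_t\,\Psi$ and $dZ_t'\,\Psi$ are genuinely independent (so $M$ is invertible for a sub-maximal balanced representation and the matrix $\begin{bmatrix}\alpha & \beta \\ \gamma & \delta\end{bmatrix}$ is uniquely determined), and that the $\lambda$ fixed by the standing assumptions does deliver $\langle dZ_t\,\Psi|dZ_t'\,\Psi\rangle = 0$, on which the entire second half of the statement depends.
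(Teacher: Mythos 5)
Your overall strategy is the one the paper itself uses: work in the two-dimensional span of $dA^\ast_t\otimes I\,\Psi$ and $I\otimes dA^\ast_t\,\Psi$, identify the transfer matrix, and extract the scalar relations from vacuum inner products of increments, with the choice of $\lambda$ entering precisely through $\langle dZ_t\Psi|dZ'_t\Psi\rangle=0$. Two of your steps are genuinely different and, in fact, tidier. For $\alpha+\gamma=1$ and $\beta=-\delta$ the paper inverts the $2\times2$ matrix explicitly and reads the relations off the resulting expressions; your argument from the tautology $Z=B+B^\ast$ plus linear independence of $dZ_t\Psi$ and $dZ'_t\Psi$ reaches the same conclusion with no computation (the independence you need is exactly the nonvanishing of the determinant $D$ that the paper computes). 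For the closed forms, the paper combines the three quadratic identities (subtracting (\ref{eq:id_n}) from (\ref{eq:id_n+1}) for the real part and taking the imaginary part of (\ref{eq:id_m})), whereas you add one extra linear pairing $\langle dZ_t\Psi|dB_t\Psi\rangle=\alpha\,\langle dZ_t\Psi|dZ_t\Psi\rangle$ together with $dZ_t\,dB_t=(n+m)\,dt$; that is a legitimate and more direct route.

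The one substantive problem is that your final step does not deliver what you claim. Using the variance you yourself cite from \erf{eq:dY}, $dZ_t\,dZ_t=(2n+1+2\,\mathrm{Re}\,m)\,dt$, your pairing gives $\alpha=(n+m)/(2n+1+2\,\mathrm{Re}\,m)$, hence $\gamma=(n+1+m^\ast)/(2n+1+2\,\mathrm{Re}\,m)$, and your substitution step likewise produces the quadratic identities with prefactors $2n+1+2\,\mathrm{Re}\,m$ and $2n+1+2\,\mathrm{Re}(e^{2i\lambda}m)$. These coincide with (\ref{eq:id_n})--(\ref{eq:alpha+gamma}) only when $\mathrm{Re}\,m=0$ (the thermal case). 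So either you must explain how the printed expression $\alpha=(n+\tfrac{1}{2}\mathrm{Re}\,m+i\,\mathrm{Im}\,m)/(2n+1+\mathrm{Re}\,m)$ is recovered --- which cannot be done consistently with \erf{eq:dY}: for instance, for the Bishop--Vourdas parameters with $\theta=0$, $r/2=0.3$, $\rho/2=0.7$ and $\lambda=\pi/2$, the transfer matrix gives $\alpha\approx 0.396=(n+m)/(2n+1+2m)$ rather than $(n+\tfrac{1}{2}m)/(2n+1+m)\approx 0.364$ --- or you should state explicitly that your derivation establishes the proposition only up to these factors of two, i.e.\ that the printed constants are in tension with \erf{eq:dY} and your computation yields the corrected version. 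As written, asserting that the pairing ``produces (\ref{eq:alpha+gamma})'' glosses over this discrepancy. Everything else you worry about (the $O(dt)$ bookkeeping, invertibility of the $(dZ,dZ')$ system in the sub-maximal balanced case) is handled at the same level of rigour as in the paper's own proof and is not a gap relative to it.
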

\begin{proof}
The transfer matrix from the increments $dB_t\, \Psi$ and $dB^\ast_t \, \Psi$ to $dZ_t \, \Psi$ and $dZ'_t \, \Psi$ is
\begin{eqnarray}
    \left[ 
    \begin{array}{cc}
        \alpha & \beta  \\
        \gamma & \delta 
    \end{array}
    \right]
    =
    \left[ 
    \begin{array}{cc}
        y & w  \\
        x^\ast & z^\ast
    \end{array}
    \right]
    \,
    \left[ 
    \begin{array}{cc}
       x^\ast + y & z^\ast + w \\
        e^{i\lambda}w +e^{-i\lambda} z^\ast & e^{i\lambda}y +e^{-i\lambda}x^\ast
    \end{array}
    \right]^{-1}
    .
\end{eqnarray}
The determinant of the matrix to be inverted is
\begin{eqnarray}
    D= e^{i\lambda}(y^2+yx^\ast-w^2-wz^\ast ) +e^{-i\lambda}(x^{\ast 2}+yx^\ast -wz^\ast -z^{\ast 2}),
\end{eqnarray}
and, by elementary matrix inversion, we readily see that
\begin{eqnarray}
    \alpha &=&\frac{1}{D} \bigg( e^{i\lambda}(y^2 -w^2) +e^{-i\lambda}(yx^\ast -wz^\ast ) \bigg), \nonumber\\
    \gamma &=&\frac{1}{D} \bigg( e^{i\lambda}(yx^\ast-wz^\ast ) +e^{-i\lambda}(x^{\ast 2}-z^{\ast 2}) \bigg), \nonumber \\
    \delta &=&- \beta = \frac{1}{D}(wx^\ast - y z^\ast ) .
\end{eqnarray}
The identities $\alpha + \gamma =1$ and $\beta + \delta =0$ from the statement of the proposition follow immediately.

Next note that $\langle dB\,  \Psi ,dB\, \Psi \rangle = \langle \Psi ,  dB^\ast dB\, \Psi \rangle = n \, dt $, and substituting $dB= \alpha dZ + \beta dZ'$ also leads to $\langle dB\,  \Psi ,dB\, \Psi \rangle = |\alpha |^2 (dZ)^2+|\beta|^2 (dZ')^2$ from which we deduce (\ref{eq:id_n}). Similarly, the next two identities follow from  $\langle dB^\ast\,  \Psi ,dB^\ast\, \Psi \rangle \equiv (n+1) dt$ and $\langle dB^\ast \,  \Psi ,dB\, \Psi \rangle \equiv m \, dt$.

Using the fact that $\delta = - \beta$, we may subtract (\ref{eq:id_n}) from (\ref{eq:id_n+1}) to get $ 1= (2n+1 + \text{Re} \, m) (|\gamma |^2 - |\alpha |^2) = (2n+1 + \text{Re} \, m) (1-2 \text{Re} \, \alpha )$ where we used $\gamma = 1 -\alpha$. Therefore, we deduce $\text{Re}\, \alpha = \frac{n+  \text{Re}\, m}{2n+1 + 2\, \text{Re} \, m}$. 

Likewise, $\text{Im} \, (\gamma^\ast \alpha ) = \text{Im} (1-\alpha^\ast)\alpha = \text{Im} \, \alpha$. Therefore, taking the imaginary part of (\ref{eq:id_m}) leads to $\text{Im} \, m = (2n+1 +2\, \text{Re} \, m) \, \text{Im} \, \alpha$ and so
\begin{eqnarray}
    \text{Im} \, \alpha= - \text{Im} \, \gamma = \frac{\text{Im} \, m}{(2n+1 +2\, \text{Re} \, m)}.
\end{eqnarray}
Putting these together leads to the expressions (\ref{eq:alpha+gamma}).
\end{proof}

\bigskip

A pivotal result is the following one concerning conditional expectations onto von Neumann algebras: to the best of our knowledge, this first appeared as Lemma 6.1 in the (as yet unpublished) pre-print of Bouten and van Handel\cite{BvH}. 

\begin{lemma}[Quantum Bayes Formula, \cite{BvH}]
    Let $(\mathfrak{A}, \varphi)$ be a quantum probability space consisting of a von Neumann algebra and a normal state. We fix a commutative subalgebra $\mathfrak{C}$ and an element $V\in \mathfrak{C}'$, the commutant of $\mathfrak{C}$ in $\mathfrak{A}$, satisfying $\varphi (V^\ast V) =1$. An induced state $\omega$ on $\mathfrak{C}'$ is then defined by $\omega (X)=\varphi (V^\ast X V)$. Then the conditional expectation from $\mathfrak{C}'$ onto $\mathfrak{C}$ for reference state $\omega$ is given by
    \begin{eqnarray}
        \omega (X | \mathfrak{C}) = \frac{\varphi (V^\ast XV|\mathfrak{C})}{\varphi (V^\ast V|\mathfrak{C})}.
    \end{eqnarray}
\end{lemma}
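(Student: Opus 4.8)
The plan is to exhibit the right-hand side as a map $\mathfrak{C}'\to\mathfrak{C}$ having the characterizing properties of the $\omega$-conditional expectation onto $\mathfrak{C}$, and then to invoke uniqueness; this is the operator-algebraic incarnation of the classical Bayes rule. Write $R\triangleq\varphi(V^\ast V\mid\mathfrak{C})$, a positive element of the commutative algebra $\mathfrak{C}$ with $\varphi(R)=\varphi(V^\ast V)=1$. Since $V\in\mathfrak{C}'$ commutes with every $C\in\mathfrak{C}$, the support projection $p\in\mathfrak{C}$ of $R$ satisfies $\omega(I-p)=\varphi((I-p)V^\ast V)=\varphi((I-p)R)=0$, so $\omega$ is carried by $p$ and the quotient in the statement is read through the generalized inverse $R^{-1}$ of $R$ inside $p\mathfrak{C}p$ (so $R^{-1}R=p$). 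Set $E(X)\triangleq R^{-1}\,\varphi(V^\ast XV\mid\mathfrak{C})\in\mathfrak{C}$ for $X\in\mathfrak{C}'$; the goal is $E=\omega(\,\cdot\mid\mathfrak{C})$.

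First I would record the properties of the reference conditional expectation $\varphi(\,\cdot\mid\mathfrak{C}):\mathfrak{C}'\to\mathfrak{C}$, which exists by the same spectral-theorem construction used above to define $\pi_t$, since $\mathfrak{C}$ is abelian: it is normal, positive, unital, a $\mathfrak{C}$-bimodule map, and obeys the tower property $\varphi(\varphi(Y\mid\mathfrak{C}))=\varphi(Y)$. From $V\in\mathfrak{C}'$ one has $V^\ast C_1XC_2V=C_1(V^\ast XV)C_2$ for $C_i\in\mathfrak{C}$, whence, using the bimodule property and $R^{-1}\in\mathfrak{C}$, $E$ is positive, normal and a $\mathfrak{C}$-bimodule map with $E(I)=R^{-1}R=p$ and $E(C)=R^{-1}CR=C$ for $C\in\mathfrak{C}$, all identities understood modulo $I-p$, which is $\omega$-null. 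Thus $E$ is, algebraically, a conditional expectation of $\mathfrak{C}'$ onto $\mathfrak{C}$.

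Next I would verify that $E$ reproduces $\omega$, in the strong form $\omega(CX)=\omega(C\,E(X))$ for all $C\in\mathfrak{C}$, $X\in\mathfrak{C}'$. On one side $\omega(CX)=\varphi(V^\ast CXV)=\varphi(C\,V^\ast XV)=\varphi\!\big(C\,\varphi(V^\ast XV\mid\mathfrak{C})\big)$; on the other, since $C\,E(X)\in\mathfrak{C}$ commutes with $V$,
\[
    \omega(C\,E(X))=\varphi(C\,E(X)\,V^\ast V)=\varphi\!\big(C\,E(X)\,R\big)=\varphi\!\big(C\,\varphi(V^\ast XV\mid\mathfrak{C})\big),
\]
the middle step by the tower property (as $C\,E(X)\in\mathfrak{C}$) and the last because $E(X)\,R=R^{-1}R\,\varphi(V^\ast XV\mid\mathfrak{C})=p\,\varphi(V^\ast XV\mid\mathfrak{C})=\varphi(V^\ast XV\mid\mathfrak{C})$; the final equality holds because $0\le\varphi(V^\ast XV\mid\mathfrak{C})\le\|X\|\,R$ for $X\ge0$ forces this element to be carried by $p$, and the general case follows by linearity. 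Setting $C=I$ gives $\omega(E(X))=\omega(X)$.

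Finally I would close by uniqueness: the $\omega$-conditional expectation onto $\mathfrak{C}$ is the only $\mathfrak{C}$-valued map with $\omega(CX)=\omega(C\,E(X))$ for all $C\in\mathfrak{C}$, since $\omega$ is faithful on $p\mathfrak{C}p$ and hence two such maps differ by a $\mathfrak{C}$-element $\omega$-orthogonal to all of $\mathfrak{C}$, which must vanish $\omega$-a.e. As $E$ has this property, $E=\omega(\,\cdot\mid\mathfrak{C})$, the asserted formula. The only genuinely delicate point --- the remainder being a direct computation with $V\in\mathfrak{C}'$ and the module/tower properties of $\varphi(\,\cdot\mid\mathfrak{C})$ --- is that $R=\varphi(V^\ast V\mid\mathfrak{C})$ need not be invertible in $\mathfrak{C}$; handling this requires the support-projection bookkeeping above together with the observation that every quantity entering the numerator is automatically carried by $p$, so that all equalities are valid on the support of $\omega$, which is all that a conditional expectation for the state $\omega$ can detect.
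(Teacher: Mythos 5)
Your proof is correct, but it runs the argument in the opposite logical direction from the paper. The paper takes the $\omega$-conditional expectation as given and establishes the product identity $\varphi(V^\ast XV|\mathfrak{C}) = \varphi(V^\ast V|\mathfrak{C})\,\omega(X|\mathfrak{C})$ by a single chain of equalities: it tests $\varphi(V^\ast XV|\mathfrak{C})$ against an arbitrary $C\in\mathfrak{C}$, uses $[V,C]=0$ to recognize $\omega(XC)$, inserts $\omega(\cdot|\mathfrak{C})$, converts back to $\varphi$, and applies the tower/module property once more; the quotient formula then follows by dividing. You instead \emph{construct} the right-hand side as a candidate map $E(X)=R^{-1}\varphi(V^\ast XV|\mathfrak{C})$ (with $R=\varphi(V^\ast V|\mathfrak{C})$ interpreted through its support projection $p$ and generalized inverse), verify the least-squares property $\omega(CX)=\omega(C\,E(X))$, and conclude by uniqueness of the conditional expectation modulo $\omega$-null elements. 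The computational core is identical in both arguments ($V\in\mathfrak{C}'$, the $\mathfrak{C}$-module property, and the tower property), but your route buys an explicit treatment of the point the paper glosses over, namely that $R$ need not be invertible in $\mathfrak{C}$: your observation that $0\le\varphi(V^\ast XV|\mathfrak{C})\le\|X\|R$ for $X\ge0$ forces the numerator to be carried by $p$ is exactly what makes the quotient well defined on the support of $\omega$. Conversely, the paper's derivation is shorter and needs no uniqueness step, since it manipulates the genuine $\omega$-conditional expectation directly. One small caveat in your write-up: $\omega$ need not be literally faithful on $p\mathfrak{C}p$ unless $\varphi$ is faithful on $\mathfrak{C}$, but this does not matter, since (as you then say) conditional expectations are only determined $\omega$-almost everywhere, and the paper's own final step ``$C$ arbitrary implies equality'' carries exactly the same implicit a.e.\ qualification.
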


We sketch the proof, as it is quite straightforward. The conditional expectation $\varphi (Y| \mathfrak{C})$ has the (least squares) property that $\varphi (\varphi (Y| \mathfrak{C})C)=\varphi (YC)$ for all $C\in \mathfrak{C}$. Setting $Y=V^\ast XV$ for given $X\in \mathfrak{C}'$ then yields $\varphi (\varphi (V^\ast XV| \mathfrak{C})C)= \varphi (V^\ast XVC) \equiv \omega (XC)$ since $[V,C]=0$. Taking the conditional expectation, this time with respect to $\omega (\cdot | \mathfrak{C})$ inside we find that $\omega (XC) = \omega ( \omega (X | \mathfrak{C})C) $ which can be rewritten in terms of the original state as $\varphi (V^\ast V \omega (X | \mathfrak{C}) C) $, noting that $V$ will also commute with $\omega (X|\mathfrak{C})$. Now, taking the conditional expectation $\varphi (\cdot | \mathfrak{C})$ under the expectation leads to $\varphi (\varphi (V^\ast V | \mathfrak{C} )\,  \omega (X | \mathfrak{C}) C) $. Therefore,
\begin{eqnarray}
    \varphi (\varphi (V^\ast XV| \mathfrak{C})C)  \equiv \varphi (\varphi (V^\ast V | \mathfrak{C} )\,  \omega (X | \mathfrak{C}) C),
\end{eqnarray} and, since $C$ was arbitrary in $\mathfrak{C}$, we deduce that
$\varphi (V^\ast XV| \mathfrak{C}) \equiv \varphi (V^\ast V | \mathfrak{C} )\,  \omega (X | \mathfrak{C})$ gives the result.

\begin{theorem}[Quantum Kallianpur-Striebel]
\label{thm:QKS}
    The quantum filter is given by
    \begin{eqnarray}
        \pi_t (X) = \frac{\sigma_t (X)}{\sigma_t(I)}
        \label{eq:QKS}
    \end{eqnarray}
    where $\sigma_t (X) = U_t^\ast \mathbb{E}[ V_t^\ast X V_t | \mathfrak{Z}_t ] U_t$ with $V_t \in \mathfrak{Z}_t'$ satisfying the quantum stochastic differential equation
    \begin{eqnarray}
        dV_t =  \bigg\{ \big(\gamma L -\alpha L^\ast \big)\otimes dZ_t  + \big( \delta L- \beta  L^\ast \big) \otimes dZ'_t  +K \otimes dt \bigg\}) V_t ,
        \label{eq:dV}
    \end{eqnarray}
    with $V_0 = I\otimes I$.
\end{theorem}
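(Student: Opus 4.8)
The plan is to apply the Quantum Bayes Formula (the Lemma above) with $\mathfrak{A}$ the von Neumann algebra generated by the system, the $Z$-quadrature and the $Z'$-quadrature noises, $\varphi = \mathbb{E}$ the vacuum-plus-$\phi$ state, $\mathfrak{C} = \mathfrak{Z}_t$, and $V = V_t$ defined by the QSDE (\ref{eq:dV}). The first task is to verify $V_t \in \mathfrak{Z}_t'$: this is immediate from the form of the QSDE, since the coefficients multiplying $dZ_t$, $dZ_t'$ and $dt$ are all built from the system operators $L, L^\ast, K$ (which commute with the bath) and the driving processes are $Z_s, Z_s'$ for $s \le t$, while $Z_t'$ commutes with $\mathfrak{Z}_t$ because $B_t'$ commutes with $B_t$ (the standing assumption of a balanced representation was used precisely to make $Z$ and $Z'$ independent, hence a fortiori commuting). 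One also checks $\mathbb{E}[V_t^\ast V_t] = 1$ by a quantum It\^o computation: using the It\^o table for $dZ$, $dZ'$ together with (\ref{eq:id_n}), (\ref{eq:id_n+1}), (\ref{eq:id_m}), the $d\langle V^\ast V\rangle$ terms telescope exactly as in the standard Girsanov/reference-probability argument, so $\mathbb{E}[V_t^\ast V_t]$ stays constant at its initial value $1$.

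The heart of the argument is to show that $V_t$ implements the same state as $U_t$ on the relevant algebra, i.e. that $\mathbb{E}[V_t^\ast (X \otimes Y_{\text{obs}}) V_t] = \mathbb{E}[U_t^\ast (X \otimes Y_{\text{obs}}) U_t]$ for $X$ a system observable and $Y_{\text{obs}}$ built from the $Z$-process — equivalently that $j_t(X)$ and its analogue under $V_t$ have the same correlations with $\mathfrak{Z}_t$. The key identity driving this is the one established just before the theorem: acting on the vacuum $\Psi$, $dB_t\,\Psi = (\alpha\, dZ_t + \beta\, dZ_t')\Psi$ and $dB_t^\ast\,\Psi = (\gamma\, dZ_t + \delta\, dZ_t')\Psi$. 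The plan is to compare the QSDEs for $U_t$ (eq. (\ref{eq:U_QSDE})) and $V_t$ (eq. (\ref{eq:dV})): in $dU_t$ the noise term is $L\otimes dB_t^\ast - L^\ast \otimes dB_t$, and substituting the vacuum-action identities $dB_t^\ast \leftrightarrow \gamma\, dZ_t + \delta\, dZ_t'$, $dB_t \leftrightarrow \alpha\, dZ_t + \beta\, dZ_t'$ turns this into exactly $(\gamma L - \alpha L^\ast)\otimes dZ_t + (\delta L - \beta L^\ast)\otimes dZ_t'$, which is the noise term of $dV_t$; the $K\otimes dt$ term is kept identical by design. Thus $V_t\Psi = U_t\Psi$, and more is true: a routine induction on the It\^o expansion, using that the substitution is consistent order-by-order (the quantum It\^o increments of $B$ against $B^\ast$ reproduce the $dt$-corrections already absorbed in $K$ via (\ref{eq:dY}) and the three identities of Proposition \ref{prop:alpha}), shows $V_t^\ast (X\otimes C) V_t$ and $U_t^\ast(X\otimes C)U_t$ have equal expectation for all $C \in \mathfrak{Z}_t$. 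Then the Quantum Bayes Formula with reference state $\omega(\cdot) = \mathbb{E}[V_t^\ast\,\cdot\,V_t]$ gives
\begin{eqnarray}
    \mathbb{E}[U_t^\ast(X\otimes I)U_t \mid \mathfrak{Y}_t]
    = U_t^\ast\, \omega(X\otimes I \mid \mathfrak{Z}_t)\, U_t
    = U_t^\ast\, \frac{\mathbb{E}[V_t^\ast X V_t \mid \mathfrak{Z}_t]}{\mathbb{E}[V_t^\ast V_t \mid \mathfrak{Z}_t]}\, U_t
    = \frac{\sigma_t(X)}{\sigma_t(I)},
\end{eqnarray}
after conjugating the conditional expectation on $\mathfrak{Z}_t$ back to $\mathfrak{Y}_t = U_t^\ast \mathfrak{Z}_t U_t$ by unitarity, which is legitimate since $U_t \in \mathfrak{Z}_t'$ is not assumed — rather one uses that conjugation by $U_t$ is a state-preserving isomorphism intertwining the two conditional expectations (the standard move in \cite{BvH}).

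The main obstacle is the consistency of the substitution $dB_t^\ast\,\Psi = (\gamma dZ_t + \delta dZ_t')\Psi$ \emph{inside} the stochastic integral defining $V_t$ and $U_t$ — it holds for single increments acting on the vacuum, but one must be careful that the quantum It\^o second-order terms ($dB\,dB^\ast = (n{+}1)dt$, etc.) are faithfully reproduced by the corresponding $(dZ)^2$, $dZ\,dZ'$, $(dZ')^2$ terms weighted by $\alpha,\beta,\gamma,\delta$; this is exactly what the three identities (\ref{eq:id_n})–(\ref{eq:id_m}) guarantee, and the role of independence of $Z$ and $Z'$ (so that $dZ\,dZ' = 0$) is essential here, since otherwise cross-terms would spoil the match. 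Verifying this amounts to checking that the It\^o generator of $V_t^\ast(X\otimes I)V_t$ along $\mathfrak{Z}_t$ coincides with that of $j_t(X)$, which is a finite computation once the identities of Proposition \ref{prop:alpha} are in hand; I would organize it as a single lemma ("$V_t$ and $U_t$ induce the same functional on $\mathrm{alg}(X, \mathfrak{Z}_t)$") and then the theorem follows by one application of the Quantum Bayes Formula.
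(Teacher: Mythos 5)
Your proposal is correct and takes essentially the same route as the paper: substitute the vacuum-vector identities $dB_t\,\Psi=(\alpha\,dZ_t+\beta\,dZ_t')\Psi$ and $dB_t^\ast\,\Psi=(\gamma\,dZ_t+\delta\,dZ_t')\Psi$ into the QSDE for $U_t$ acting on $\phi\otimes\Psi$, conclude that $V_t\,\phi\otimes\Psi=U_t\,\phi\otimes\Psi$, and then apply the quantum Bayes formula with $V_t\in\mathfrak{Z}_t'$, conjugating the $\mathfrak{Z}_t$-conditional expectation back to $\mathfrak{Y}_t$ by $U_t$. The only inessential difference is your final It\^{o}-consistency/induction step, which is redundant here: once the vector equality holds, $\mathbb{E}[U_t^\ast A\,U_t]=\mathbb{E}[V_t^\ast A\,V_t]$ for \emph{all} $A$ follows immediately, and the second-order identities of Proposition \ref{prop:alpha} are only needed later, in the Zakai equation of Corollary \ref{Cor_QZE}.
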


\begin{proof}
    The form (\ref{eq:QKS}) follows from the quantum Bayes formula once we can construct $V_t \in \mathfrak{Z}_t'$ with the property that $\mathbb{E}[j_t(X)]= \mathbb{E}[V_t^\ast (X \otimes) V_t]$. We now give this construction.

    Employing the double-Fock representation, the overall state comes from the vector state $\phi \otimes \Psi$ and we note that
    \begin{eqnarray}
        dU_t \, \phi \otimes \Psi \equiv
        \bigg( \sqrt{n+1}L\otimes dB_t^\ast - \sqrt{n} L^\ast \otimes dB_t +K \otimes dt \bigg) U_t \, \phi \otimes \Psi,
        \label{eq:QSDE_squeeze}
    \end{eqnarray}
with $K$ given by (\ref{eq:K}). We now note that $    dB_t \, \Psi = \alpha \, dZ_t \, \Psi +\beta \,dZ'_t \, \Psi$ and $ dB_t^\ast \, \Psi =  \gamma \, dZ_t \, \Psi + \delta \,dZ'_t \, \Psi $, and making these replacements in (\ref{eq:QSDE_squeeze}) leads to
\begin{eqnarray}
        dU_t \, \phi \otimes \Psi =
        \bigg\{ \big(\gamma L -\alpha  L^\ast \big)\otimes dZ_t  + \big( \delta L- \beta  L^\ast \big) \otimes dZ'_t  +K \otimes dt \bigg\} U_t \, \phi \otimes \Psi .
        \label{eq:QSDE_Z}
    \end{eqnarray}
It is now clear that $\mathbb{E}[j_t (X)] = \langle \phi \otimes \Psi , U_t^\ast (X \otimes I ) U_t \, \phi \otimes \Psi \rangle \equiv \langle \phi \otimes \Psi , V_t^\ast (X \otimes I ) V_t \, \phi \otimes \Psi \rangle$, as desired.
\end{proof}

\begin{corollary}(Quantum Zakai Equation)
\label{Cor_QZE}
Under the standing assumptions, the unnormalized filter $\sigma_t (X) $ satisfies
    \begin{eqnarray}
        d\sigma_t (X) = \sigma_t (\mathcal{L} X) \, dt
        + \bigg\{  \sigma_t ( X\tilde L+\tilde L^\ast X) \bigg\} dY_t ,
        \label{eq:qZakai}
    \end{eqnarray}
where $\tilde L = \gamma L - \alpha L^\ast$, or explicitly
\begin{eqnarray}
    \tilde L =L - \frac{n+   m}{2n+1+2\, \text{Re}\, m} (L+L^\ast ).
    \label{eq:tilde_L}
\end{eqnarray}
We also note the useful identity
\begin{eqnarray}
    \tilde L + \tilde L^\ast =L+L^\ast .
    \label{eq:theLs}
\end{eqnarray}
\end{corollary}
\begin{proof}
    We observe that
    \begin{eqnarray}
        V_t^\ast (X \otimes I) V_t &=& X\otimes I\nonumber \\
        &+& \int_0^t
        V_s^\ast \big[ X \big( \gamma L- \alpha L^\ast \big) +(\gamma^\ast L^\ast -\alpha^\ast L)X \big]\otimes I \, V_s \otimes dZ_s \nonumber \\
        &+& \int_0^t
        V_s^\ast \big[ X( \delta L-\beta L^\ast) (\delta^\ast L^\ast - \beta^\ast L) X\big]  \otimes I \, V_s \otimes dZ_s^\prime \nonumber \\
        && + \int_0^t V_s \big( \mathcal{M} X \big) \otimes I \big) V_s\otimes ds
    \end{eqnarray}
    where
    \begin{eqnarray}
        \mathcal{M} X &=& K^\ast X+XK \nonumber \\
        && + (2n+1+2 \, \text{Re}\, m) \,(\gamma^\ast L^\ast -\alpha^\ast L)X(\gamma L - \alpha L^\ast )
        \nonumber \\
        &&+ (2n+1+2 \, \text{Re}\, (e^{2
        i\lambda} m) ) \, (\delta^\ast L^\ast -\beta^\ast L)X (\delta L - \beta L^\ast)
        .
    \end{eqnarray}
    Note that we used the fact that $dZdZ' \equiv 0$ under our standing assumptions. 
    After a routine expansion and collection of terms, one finds from Proposition \ref{prop:alpha} - especially the identities  (\ref{eq:id_n},\ref{eq:id_n+1},\ref{eq:id_m}) - that $\mathcal{M}$, in fact, just equates to the original squeezed Lindblad generator $\mathcal{L}$ from (\ref{eq:L}). We omit the derivation as the result is somewhat obvious given that we are essentially computing the $dt$-component from the master equation by other means.

    We now take the conditional expectation onto $\mathfrak{Z}_t$ and obtain
    \begin{gather}
        \mathbb{E} [ V_t^\ast (X \otimes I) V_t  | \mathfrak{Z}_t] 
       = X\otimes I \nonumber \nonumber \\
     + \int_0^t
        \mathbb{E} \big[ 
        V_s^\ast \big\{  X \big( \gamma L- \alpha L^\ast \big) +(\gamma^\ast L^\ast -\alpha^\ast L)X \big\}\otimes I \, V_s | \mathfrak{Z_s}\big] \otimes dZ_s \nonumber \\
    +  \int_0^t
        \mathbb{E} \big[ 
        V_s^\ast \big[ \mathcal{L}X \otimes I \big]V_s | \mathfrak{Z_s}\big] \, ds
    \end{gather}
    using the fact that the increments $dZ_t'$ will have conditional mean zero.
    This implies that
    \begin{eqnarray}
        d\,  \mathbb{E} [ V_t^\ast (X \otimes I) V_t  | \mathfrak{Z}_t] 
        &=& \mathbb{E} \big[ 
        V_t^\ast \big[ X \big( \gamma L- \alpha L^\ast \big)\otimes I \big]V_t | \mathfrak{Z_t}\big] \otimes dZ_t \nonumber \\
        &&+ \mathbb{E} \big[ 
        V_t^\ast \big[ \big( \gamma^\ast L^\ast- \alpha^\ast L \big)X\otimes I \big]V_t | \mathfrak{Z_t}\big] \otimes dZ_t \nonumber \\
        &&+  
        \mathbb{E} \big[ 
        V_t^\ast \big[ \mathcal{L}X \otimes I \big]V_t | \mathfrak{Z_t}\big] \, dt .
    \end{eqnarray}
    A unitary conjugation yields $\sigma_t (X) = U_t^\ast \mathbb{E} [ V_t^\ast (X \otimes I) V_t  | \mathfrak{Z}_t] U_t$ and we note that the same conjugation converts $Z_t$ into $Y_t$.
\end{proof}

\begin{theorem}(Quantum Stratonovich-Kushner Equation)
    The (normalized) quantum filter $\pi_t (X)$ satisfies the equation
    \begin{eqnarray}
        d \pi_t (X) &=& \pi_t (\mathcal{L }X) \, dt
        +\bigg\{ \pi ( X\tilde L+ \tilde L^\ast X) -\pi_t (X) \pi_t (\tilde L+\tilde L^\ast ) \big]
        \bigg\} \, dI_t
        \label{eq:SK}
    \end{eqnarray}
    where the innovations process $I_t$ is defined by $I_0=0$ and
    \begin{eqnarray}
        dI_t = dY_t -   \pi_t ( L+ L^\ast )\, dt .
    \label{eq:innovations}
    \end{eqnarray}
\end{theorem}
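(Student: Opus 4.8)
The plan is to pass from the unnormalised filter of Corollary~\ref{Cor_QZE} to the normalised one by applying the It\^{o} quotient rule to $\pi_t(X)=\sigma_t(X)/\sigma_t(I)$. First one checks that this can be carried out with ordinary stochastic calculus: since $\mathbb{E}[V_t^\ast X V_t\,|\,\mathfrak{Z}_t]$ lies in $\mathfrak{Z}_t$, the process $\sigma_t(X)=U_t^\ast\,\mathbb{E}[V_t^\ast X V_t|\mathfrak{Z}_t]\,U_t$ is affiliated to the commutative algebra $\mathfrak{Y}_t=U_t^\ast\mathfrak{Z}_t U_t$, as are $\sigma_t(I)$ and $Y_t$; realising $\mathfrak{Y}_\infty=\bigvee_t\mathfrak{Y}_t$ on a classical probability space via the spectral theorem then lets us manipulate these as scalar It\^{o} processes. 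Taking $X=I$ in (\ref{eq:qZakai}) and using $\mathcal{L}I=0$ gives $d\sigma_t(I)=\sigma_t(\tilde L+\tilde L^\ast)\,dY_t$ with $\sigma_0(I)=1$, a stochastic exponential driven by $Y$, so $\sigma_t(I)$ stays strictly positive and $\pi_t(X)$ is well defined.

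Next I would expand the quotient: writing $N_t=\sigma_t(X)$, $D_t=\sigma_t(I)$,
\[
 d\pi_t(X)=\frac{dN_t}{D_t}-\frac{N_t}{D_t^{2}}\,dD_t-\frac{dN_t\,dD_t}{D_t^{2}}+\frac{N_t}{D_t^{3}}\,(dD_t)^{2},
\]
and substitute $dN_t=\sigma_t(\mathcal{L}X)\,dt+\sigma_t(X\tilde L+\tilde L^\ast X)\,dY_t$, $dD_t=\sigma_t(\tilde L+\tilde L^\ast)\,dY_t$, together with the scalar quadratic variation $dY_t\,dY_t=(2n+1+2\,\mathrm{Re}\,m)\,dt$ from (\ref{eq:dY}). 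The drift term contributes $\pi_t(\mathcal{L}X)\,dt$; the two $dY_t$-terms combine into $\bigl(\pi_t(X\tilde L+\tilde L^\ast X)-\pi_t(X)\,\pi_t(\tilde L+\tilde L^\ast)\bigr)\,dY_t$; and the two It\^{o}-correction terms factor through this same bracket, producing $-(2n+1+2\,\mathrm{Re}\,m)\,\pi_t(\tilde L+\tilde L^\ast)\bigl(\pi_t(X\tilde L+\tilde L^\ast X)-\pi_t(X)\,\pi_t(\tilde L+\tilde L^\ast)\bigr)\,dt$. Collecting, one arrives at (\ref{eq:SK}) with innovations increment $dI_t=dY_t-(2n+1+2\,\mathrm{Re}\,m)\,\pi_t(\tilde L+\tilde L^\ast)\,dt$.

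It then remains to recognise this $dI_t$ as (\ref{eq:innovations}). This is where Proposition~\ref{prop:alpha} is used: from $\tilde L=\gamma L-\alpha L^\ast$ and $\alpha+\gamma=1$ one has $\tilde L+\tilde L^\ast=(1-2\,\mathrm{Re}\,\alpha)(L+L^\ast)$, and inserting the explicit $\alpha$ of (\ref{eq:alpha+gamma}) collapses the scalar $(2n+1+2\,\mathrm{Re}\,m)(1-2\,\mathrm{Re}\,\alpha)$ to the coefficient of $\pi_t(L+L^\ast)$ appearing in (\ref{eq:innovations}); linearity of $\pi_t$ finishes the identification. One should also confirm that $I_t$ deserves the name innovations process, i.e.\ that it is a martingale (in fact a rescaled Wiener process) for the observation filtration $\{\mathfrak{Y}_t\}$; this follows because the subtracted drift is precisely the $\mathfrak{Y}_t$-conditional drift of $Y_t$, which one reads off from the Kallianpur--Striebel representation of Theorem~\ref{thm:QKS} together with the Zakai equation.

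The It\^{o} bookkeeping in the middle step is routine. I expect the real obstacle to be the final coefficient identity --- verifying that $(2n+1+2\,\mathrm{Re}\,m)\,\pi_t(\tilde L+\tilde L^\ast)$ reduces to the drift in (\ref{eq:innovations}) --- since this is exactly where the explicit expressions of Proposition~\ref{prop:alpha}, and hence the standing assumptions (balanced representation and independent commuting quadrature $Z'$) that produced them, enter essentially. A secondary, more technical point is the measure-theoretic groundwork: strict positivity of $\sigma_t(I)$ and the legitimacy of using classical It\^{o} calculus inside the abelian algebra $\mathfrak{Y}_\infty$.
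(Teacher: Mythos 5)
Your proposal follows the paper's own route essentially verbatim: set $X=I$ in the Zakai equation to get $d\sigma_t(I)=\sigma_t(\tilde L+\tilde L^\ast)\,dY_t$, apply the It\^{o} quotient/product rule with $dY_t\,dY_t=(2n+1+2\,\mathrm{Re}\,m)\,dt$, collect terms to obtain (\ref{eq:SK}) with $dI_t=dY_t-(2n+1+2\,\mathrm{Re}\,m)\,\pi_t(\tilde L+\tilde L^\ast)\,dt$, and then substitute (\ref{eq:tilde_L}); your remarks on positivity of $\sigma_t(I)$ and on working classically inside the abelian algebra merely make explicit what the paper leaves implicit. The caveat you flag at the end is genuine but shared with the paper: using the paper's explicit $\alpha$ one gets $(2n+1+2\,\mathrm{Re}\,m)(1-2\,\mathrm{Re}\,\alpha)=\frac{2n+1+2\,\mathrm{Re}\,m}{2n+1+\mathrm{Re}\,m}$, which does not literally reduce to the coefficient $n+\frac{1}{2}\mathrm{Re}\,m$ displayed in (\ref{eq:innovations}), so the final identification rests on constants whose bookkeeping (the factors of $2\,\mathrm{Re}\,m$ versus $\mathrm{Re}\,m$ in Proposition \ref{prop:alpha} and (\ref{eq:dY})) needs to be reconciled in the source itself rather than in your argument.
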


\begin{proof}
    We first observe that the normalization factor satisfies $d\sigma_t (I) =  \sigma_t (\tilde L+\tilde L^\ast ) \, dY_t$. From the Ito formula and (\ref{eq:dY}), we see that
    \begin{eqnarray}
        d \frac{1}{\sigma_t (I)} = -\frac{\sigma_t (\tilde L+\tilde L^\ast ) }{\sigma_t (I)^2}  \, dY_t + (2n+1+2 \text{Re}\, m )\,\frac{ \sigma_t (\tilde L+\tilde L^\ast)^2 }{\sigma_t (I)^3} \, dt ,
    \end{eqnarray}
    therefore, from the Ito product rule
    \begin{eqnarray}
        d \pi_t (X) = d \frac{\sigma_t (X)}{\sigma _t (I)}
        = d\sigma_t (X) \, \frac{1}{\sigma _t (I)}
        +\sigma_t (X)\, d\frac{1}{\sigma _t (I)}
        +d\sigma_t (X)\, d\frac{1}{\sigma _t (I)}
    \end{eqnarray}
    and the desired result (\ref{eq:SK}) readily follows. Note that we obtain the innovations as $ dI_t = dY_t - (2n+1+2 \text{Re} \, m )\,  \pi_t (\tilde L+\tilde L^\ast )\, dt$ and the result follows from the identity (\ref{eq:theLs}).
\end{proof}

\begin{remark}
    Similar to the vacuum case, the innovations process here will have the statistics of a Wiener process, though with the variance of $I_t$ being $(2n+1 +\text{Re} \, m)\, t$.
\end{remark}
\begin{remark}
    In the thermal case ($m=0, e^{i\lambda}=\frac{1}{i}$), we have $\alpha = \frac{n}{2n+1}, \gamma = \frac{n+1}{2n+1}$, and so $\tilde L \to \frac{n+1}{2n+1} L- \frac{n}{2n+1} L^\ast$ which recovers the thermal filter results from \cite{Gough25}.
\end{remark}

\section{Discussion and Conclusion}
We have derived the quantum filter for the case where the external input fields are in a squeezed state and we conduct a homodyne detection on a quadrature of the output field. 
Our analysis is restricted to a convenient class where the second field is chosen so as to lead to what we term a balanced Bogoliubov transformation of the two vacuum field processes behind the representation.

The result has applications to quantum optics problems, but we have also been inspired by the problem of formulating a quantum open systems approach to the Unruh-DeWitt detector where the Minkowski vacuum can be written as an entangled two–mode squeezed state of left- and right-Rindler modes: see \cite{Gough25} and \cite{Scully}.  In principle, this may be applied to Hawking radiation, where we would measure the late-time output modes (the Hawking radiation field) which are entangled with their in-falling quanta behind the horizon.

The Unruh and Hawking effects stem from the common feature that different observers decompose the same quantum field into modes. One observer (Minkowski or Kruskal) may interpret the field as being in a vacuum state, however, an accelerated observer (Unruh) or a non-free-falling observer outside a black hole (Hawking) may see this as an entangled two‑mode squeezed state. In both these circumstances, when the part that is inaccessible (either lying behind the Rindler or a black‑hole horizon)  is traced out, the remaining field is thermal (no squeezing!). However, this relies explicitly on the assumption that the observer is either uniformly accelerating (Unruh) or fixed relative to a black-hole; if these conditions are dropped then the non-stationary nature should lead to reduced states will be generically squeezed. The situation is similar to dynamical Casimir effects where time-dependent boundary conditions may lead to parametric amplification and consequently observable squeezed radiation.

The mathematical filtering problem addressed here had to tackle the fact that commutant of the algebra of measured variables is, in stark contrast to the vacuum case, a nontrivial quantum field algebra. However, our calculation utilizes Araki-Woods type representations and Tomita-Takesaki formalism. We solve the problem for a single quantum input channel which is a squeezed state with parameters $(n,m)$. Crucially, the filter we must be independent of the exact choice of representation used, as otherwise the answer is unphysical; but we see that this is indeed the case here.

\section*{Acknowledgement}
The authors have the pleasant duty to thank Luc Bouten for several stimulating discussions on quantum filtering, especially with regard to the reference probability approach. We also thank Claus K\"{o}stler, Alex Belton and Martin Lindsay for raising our attention to their work on squeezed quantum noises. We are also grateful, and indeed beholden, to an anonymous referee for suggestions that not only significantly ameliorated the presentation, but alerted us to a number of algebraic errors in the original.

\end{document}